\newtheorem{thm}{Theorem}
\newtheorem{observation}[thm]{Observation}
\newtheorem{corollary}[thm]{Corollary}
\newtheorem{definition}[thm]{Definition}
\newtheorem{lem}[thm]{Lemma}
\newtheorem{prop}[thm]{Proposition}
\newtheorem{problem}[thm]{Problem}
\newcommand{\poly}{\text{poly}}
\newcommand{\norm}[2]{\left\|#1\right\|_{#2}}
\newcommand{\inner}[2]{\left\langle #1, #2 \right\rangle}
\newcommand{\abs}[1]{\left|#1\right|}
\newcommand{\E}{\mathbb{E}}
\newcommand{\R}{\mathbb{R}}
\newcommand{\parens}[1]{\left(#1 \right)}
\newcommand{\eps}{\epsilon}
\newcommand{\tildeO}{\widetilde{O}}
\renewcommand{\E}{\mathbb{E}}
\DeclareMathOperator{\Tr}{Tr}
\DeclareMathOperator{\diag}{diag}
\DeclareMathOperator{\spn}{span}
\DeclareMathOperator{\tv}{TV}
\date{}
\begin{document}
\title{Optimal Eigenvalue Approximation via Sketching}
\date{}
\author{
William Swartworth
\\UCLA \and
David P. Woodruff\\CMU
}
\maketitle
\thispagestyle{empty}


\begin{abstract}
    Given a symmetric matrix $A$, we show from the simple sketch $GAG^T$, where $G$ is a Gaussian matrix with $k = O(1/\epsilon^2)$ rows, that there is a procedure for approximating all eigenvalues of $A$ simultaneously to within $\epsilon \|A\|_F$ additive error with large probability.  Unlike the work of (Andoni, Nguyen, SODA, 2013), we do not require that $A$ is positive semidefinite and therefore we can recover sign information about the spectrum as well. Our result also significantly improves upon the sketching dimension of recent work for this problem (Needell, Swartworth, Woodruff FOCS 2022), and in fact gives optimal sketching dimension. Our proof develops new properties of singular values of $GA$ for a $k \times n$ Gaussian matrix $G$ and an $n \times n$ matrix $A$ which may be of independent interest. Additionally we achieve tight bounds in terms of matrix-vector queries. Our sketch can be computed using $O(1/\epsilon^2)$ matrix-vector multiplies, and by improving on lower bounds for the so-called rank estimation problem, we show that this number is optimal even for adaptive matrix-vector queries.
\end{abstract}

\clearpage

\setcounter{page}{1}

\newpage


\section{Introduction}
Estimating the eigenvalues of a real symmetric matrix has numerous applications in data analysis, engineering, optimization, spectral graph theory, and many other areas. As modern matrices may be very large, traditional algorithms based on the singular value decomposition (SVD), subspace iteration, or Krylov methods, may be be too slow. Therefore, a number of recent works have looked at the problem of creating a small summary, or sketch of the input matrix, so that from the sketch one can approximate each of the eigenvalues well. Indeed, in the realm of sublinear algorithms, this problem has been studied in the streaming model \cite{andoni2013eigenvalues}, the sampling and property testing models \cite{BLW019,bcj20,bddmr21,BKM22}, and matrix-vector and vector-matrix-vector query models \cite{andoni2013eigenvalues,LNW14,LNW19,nsw22}; the latter model also contains so-called bilinear sketches. 

In this work we focus on designing linear sketches for eigenvalue estimation. Namely, we are interested in estimating the spectrum of a real symmetric matrix $A \in \R^{n\times n}$ up to $\eps\norm{A}{F}$ error via a bilinear sketch $G A G^T$ with $G\in \R^{k\times n}$ is a matrix of i.i.d. $N(0,1/k)$ random variables, i.e., Gaussian of mean zero and variance $1/k$. The algorithm should succeed with large constant probability in estimating the entire spectrum. This is a very natural sketch, and unsurprisingly has been used before both in \cite{andoni2013eigenvalues} to estimate eigenvalues with an additive error of roughly $\epsilon \sum_{i=1}^n |\lambda_i(A)|$, where $\lambda_i(A)$ are the eigenvalues of $A$, as well as in \cite{nsw22} for testing if a matrix is positive semidefinite (PSD). We note that the additive error of $\epsilon \|A\|_1 = \epsilon \sum_{i=1}^n |\lambda_i(A)|$ can be significantly weaker than our desired $\eps \norm{A}{F}$ error, as $\norm{A}{F}$ can be as small as $\frac{\|A\|_1}{\sqrt{d}}$. This is analogous to the $\ell_2$ versus $\ell_1$ guarantee for heavy hitters in the data stream model, see, e.g., \cite{W16}. 

It may come as a surprise that $GAG^T$ has any use at all for achieving additive error in terms of $\epsilon \|A\|_F$! Indeed, the natural way to estimate the $i$-th eigenvalue of $A$ is to output the $i$-th eigenvalue of $GAG^T$, and this is exactly what the algorithm of \cite{andoni2013eigenvalues} does. However, by standard results for trace estimators, see, e.g., \cite{MMMW21} and the references therein, the trace of $GAG^T$ is about the trace of $A$, which can be a $\sqrt{d}$ factor larger than $\|A\|_F$, and thus the estimation error can be much larger than $\epsilon \|A\|_F$. This is precisely why \cite{andoni2013eigenvalues} only achieves additive $\epsilon \|A\|_1$ error with this sketch. Moreover, the work of \cite{nsw22} does use sketching for eigenvalue estimation, but uses a different, and much more involved sketch based on ideas for low rank approximation of PSD matrices \cite{CW17}, and achieves a much worse $\tilde{O}(k^2/\epsilon^{12})$ number of measurements to estimate each of the top $k$ eigenvalues, including their signs, up to additive error $\epsilon \|A\|_F$. Here we use $\tilde{O}()$ notation to suppress poly$(\log(n/\epsilon))$ factors. Note that for $k > 1/\epsilon^2$, one can output $0$ as the estimate to $\lambda_k$, and thus the sketch size of \cite{nsw22} is $\tilde{O}(1/\epsilon^{16})$. 

To achieve error in terms of $\|A\|_F$, the work of \cite{andoni2013eigenvalues} instead considers the sketch $GAH^T$, where $G,H \in \R^{k \times n}$ are independent Gaussian matrices. However, the major issue with this sketch is it inherently loses sign information of the eigenvalues. Indeed, their algorithm for reconstructing the eigenvalues uses only the sketched matrix, while forgetting $G$ and $H$ (more specifically they only use the singular values of this matrix).  However the distributions of $G$ and $H$ are invariant under negation, so the sketch alone cannot even distinguish $A$ from $-A.$ In addition to this, even if one assumes the input $A$ is PSD, so that the signs are all positive, 
%
their result for additive error $\epsilon \|A\|_F$ would give a suboptimal sketching dimension of $k = \tilde{O}(1/\eps^3)$; see further discussion below. 

\subsection{Our Contributions}

\paragraph{Optimal Sketching Upper Bound.} We obtain the first optimal bounds for eigenvalue estimation with the natural $\epsilon \|A\|_F$ error via sketching. We summarize our results compared to prior work in Table \ref{tab:results}. We improve over  \cite{andoni2013eigenvalues,nsw22} in the following crucial ways.

\begin{table}[ht]
\caption{Our work and prior work on estimating each eigenvalue of an arbitrary symmetric matrix $A$ up to additive $\epsilon \|A\|_F$ error.}
\label{tab:results}
\centering
\begin{tabular}{ c c c c c c c } 
\toprule 
& Sketching dimension & Reference & Notes \\
\midrule
& $\tilde{O}(1/\epsilon^6)$ & \cite{andoni2013eigenvalues} & Loses sign information \\
& $\tilde{O}(1/\epsilon^{16})$  & \cite{nsw22} & \\
& $\Omega(1/\epsilon^4)$ & \cite{nsw22} & Lower bound\\
& $O(1/\epsilon^4)$ & {\bf Our Work} &  \\
\bottomrule
\end{tabular}
\end{table}

Qualitatively, we drop the requirement that $A$ is PSD. As mentioned, the eigenvalues of our sketch $GAG^T$ may not be good approximations to the eigenvalues of $A$. In particular, we observe that the sketched eigenvalues concentrate around $\frac1k \Tr(A),$ which could be quite large, on the order of $\frac{\sqrt{d}}{k}\norm{A}{F}$.  By shifting the sketched eigenvalues by $-\frac1k \Tr(A)$ via an additional trace estimator we compute, this enables us to correct for this bias, and we are able to show that the resulting eigenvalues are good approximations to those of $A.$  In order to perform this correction we in fact require the sketched eigenvalues to concentrate around $\frac1k \Tr(A)$.  Obtaining this concentration is where we require Gaussianity in our argument\footnote{However in the appendix we give a faster sketch for PSD matrices.}.  We leave it as an open question to obtain similar concentration from common sketching primitives.

\paragraph{Comparison with existing work.} Quantitatively, the analysis of \cite{andoni2013eigenvalues} for the related $GAH^T$ sketch works by splitting the spectrum into a ``head" containing the large eigenvalues, and a ``tail" containing the remaining eigenvalues.  The authors then incur an additive loss from the operator norm of the tail portion of the sketch, and show that the head portion of the sketch approximates the corresponding eigenvalues to within a multiplicative error.  Notably, their multiplicative constant is uniform over the large eigenvalues.  This is a stronger guarantee than we need.  For example, to approximate an eigenvalue of $1/2$ to within $\eps$ additive error, we need a $(1\pm O(\eps) )$ multiplicative guarantee.  However to approximate an eigenvalue of $2\eps$ to within $\eps$ additive error, a $(1\pm O(1))$ multiplicative guarantee suffices.  In other words, smaller eigenvalues require less stringent multiplicative guarantees to achieve the same additive guarantee. We leverage this observation in order to get a uniform \textit{additive} guarantee for the large eigenvalues, while not relying on a uniform multiplicative guarantee. Thus, we improve the worst-case $k = O(1/\epsilon^3)$ bound of \cite{andoni2013eigenvalues} to a $k = O(1/\epsilon^2)$ bound for an $\epsilon \|A\|_F$ error guarantee. 

Indeed, one can show if the eigenvalues of $A$ are, in non-increasing order, \[\frac{c_d}{\sqrt{1}}, \frac{c_d}{\sqrt{2}}, \frac{c_d}{\sqrt{3}}, \frac{c_d}{\sqrt{4}}, \ldots, \frac{c_d}{\sqrt{d}},\] 
where $c_d = O(\log^{-1/2} d)$ so that $\norm{A}{F}=1$, then $O(1/\epsilon^3)$ is the bound their Theorem 1.2 and corresponding Lemma 3.5 would give. To see this, their Lemma 3.5, which is a strengthening of their Theorem 1.2, states that for $i=1\ldots k,$
\begin{equation}
    \label{eq:andoni_lemma}
    \abs{\lambda_i^2(GAH^T) - \lambda_i^2(A)} \leq \alpha\lambda_i^2(A) + O\left(\lambda_k^2(A)\right) + O\left(\frac{\alpha^2}{k} \norm{A_{-k}}{F}^2\right),
\end{equation}
with sketching dimension $O(k/\alpha^2)$ on each side (and hence $O(k^2/\alpha^4)$ total measurements). Suppose $\norm{A}{F}=O(1)$ and that we would like to use this bound to approximate $\lambda_{\ell}(A) > \alpha$ to within $\eps$ additive error. After adjusting for the squares, this is equivalent to bounding the left-hand side of (\ref{eq:andoni_lemma}) by $O(\eps \lambda_{\ell})$ for $i=\ell.$  Obtaining such a bound from (\ref{eq:andoni_lemma}) requires that the first two terms on the right-hand side are bounded by $O(\eps \lambda_{\ell}(A))$, i.e., that $\alpha \leq O(\eps/\lambda_{\ell}(A))$ and $\lambda_k^2(A) \leq O(\eps \lambda_{\ell}(A))$.  For the spectrum above, we must therefore take $k\gtrsim c_d \frac{\sqrt{\ell}}{\eps},$ which results in a sketching dimension of 
\[
\frac{k}{\alpha^2} \approx \frac{c_d \sqrt{\ell}}{\eps}\cdot \frac{\lambda_{\ell}(A)^2}{\eps^2} = \frac{c_d^3}{\eps^3 \sqrt{\ell}}
\]
on each side.

Thus for this spectrum, \cite{andoni2013eigenvalues} requires a sketching dimension of $O(1/\eps^3)$ (up to $\log d$ factors) to approximate the largest eigenvalues of $A$ to $\eps$ additive error. Indeed this bound does not achieve $O(1/\eps^2)$ sketching dimension, unless $\ell \gtrsim 1/\eps^2$, at which point $\lambda_{\ell}(A) \leq O(\eps)$ and does not need to be approximated by our algorithm.


We note that while \cite{nsw22} could also report the signs of the approximate eigenvalues, their $\tilde{O}(1/\epsilon^{16})$ sketch size makes it considerably worse for small values of $\epsilon$. 

In contrast, our sketching dimension $k$ is optimal among all non-adaptive bilinear sketches, due to the proof of part 1 of Theorem 31 of \cite{nsw22} applied with $p = 2$. Indeed, the proof of that theorem gives a pair of distributions on matrices $A$ with $\|A\|_F = \Theta(1)$ for which in one distribution $A$ is PSD, while in the other it has a negative eigenvalue of value $-\Theta(\epsilon)$. That theorem shows $\Omega(1/\epsilon^4)$ non-adaptive vector-matrix-vector queries are required to distinguish the two distributions, which implies in our setting that necessarily $k = \Omega(1/\epsilon^2)$. 

\paragraph{Concentration of Singular Values with Arbitrary Covariance Matrices.} Of independent technical interest, we give the first bounds on the singular values of $GB$ for an $n \times n$ matrix $B$ and a (normalized) Gaussian matrix $G$ with $k$ rows when $k \ll n$. When taken together, our upper and lower bounds on singular values show for any $1 \leq \ell$ and  $k \geq \Omega(\ell)$, that 
\begin{eqnarray}\label{eqn:singular}
\sigma_{\ell}(GB)^2 = \sigma_{\ell}(B)^2 \pm O \left (\frac{1}{\sqrt{k}} \right) \norm{B}{F}^2.
\end{eqnarray}
Although there is a large body of work on the singular values of $GB$, to the best of our knowledge there are no quantitative bounds of the form above known. There is work upper bounding $\|GB\|_2$ for a fixed matrix $B$ \cite{vershynin2011spectral}, and classical work (see, e.g., \cite{vershynin2010introduction}) which bounds all the singular values of $G$ when $B$ is the identity, but we are not aware of concrete bounds that prove concentration around $\|GB\|_F^2$ of the form in (\ref{eqn:singular}) for general matrices $B$ that we need. 

\paragraph{Optimal Adaptive Matrix-Vector Query Lower Bound.} A natural question is whether adaptivity can further reduce our sketching dimension. We show that at least in the matrix-vector product model, where one receives a sequence of matrix-vector products $Av^1, Av^2, \ldots, Av^r$ for query vectors $v^1, v^2, \ldots, v^r$ that may be chosen adaptively as a function of previous matrix-vector products, that necessarily $r = \Omega(1/\epsilon^2)$.

Note that our non-adaptive sketch $GAG^T$ gives an algorithm in the matrix-vector product model by computing $AG^T$, and so $r = k = O(1/\epsilon^2$). This shows that adaptivity does not help for eigenvalue estimation, at least in the matrix-vector product model. 

Our hard instance is distinguishing a Wishart matrix of rank $r$ from a Wishart matrix of rank $r+2$ (the choice of $r+2$ rather than $r+1$ is simply for convenience). We first argue that for our pair of distributions, adaptivity does not help. This uses rotational invariance properties of our Wishart distribution, even conditioned on the query responses we have seen so far. In fact, our argument shows that without loss of generality, the optimal tester is a non-adaptive tester which just observes the leading principle submatrix of the input matrix $A$. We then explicitly bound the variation distance between the distributions of a Wishart matrix of rank $r$ and one of rank $r+2$. We also give an alternative, but related proof based on distinguishing a random $r$ dimensional subspace from a random $r+2$ dimensional subspace, which may be of independent interest. As an example, we note that this lower bound immediately recovers the $\Omega(1/\eps)$ matrix-vector lower bound for estimating the trace of a PSD matrix to within $(1\pm \eps)$ multiplicative error \cite{meyer2021hutch++,JPWZ21}, as well as the $\Omega(1/\eps^p)$ lower bound given in \cite{woodruff2022optimal} for approximating the trace of $A$ to additive $\eps \norm{A}{p}$ error (however the bound in \cite{woodruff2022optimal} is more refined as it captures the dependence on failure probability).

These results substantially broaden a previous lower bound for the rank-estimation problem \cite{sun2021querying}. Whereas the hard instance in \cite{sun2021querying} requires  some non-zero eigenvalues to be extremely small, we show that the rank estimation problem remains hard even when all nonzero eigenvalues have comparable size (or in fact, even when they are all equal).


\subsection{Additional Work on Sampling in the Bounded Entry Model}
%
%
%
Recent work has considered the spectral estimation problem for entry queries to bounded-entry matrices. The work of \cite{bddmr21} gives an $\tildeO(1/\eps^6)$ query algorithm for approximating all eigenvalues of a symmetric matrix to within $\eps \norm{A}{F}$ additive error, given a row-norm sampling oracle.  However it remains open whether this bound can be improved to $\tildeO(1/\eps^4)$ even for principal submatrix queries.

Our result shows that $O(1/\eps^4)$ queries is at least attainable under the much less restrictive model of vector-matrix-vector queries.  In contrast to \cite{bddmr21}, our algorithm does not simply return the eigenvalues of our sketch. Indeed no such algorithm can exist as it would violate the one-sided lower bound of \cite{nsw22}.

%
%

\section{Sketching Algorithm and Proof Outline}

\begin{algorithm}[hbt!]
\caption{}\label{alg:main}
\begin{algorithmic}
\Require $A\in \R^{d\times d}$ real symmetric, $k\in \mathbb{N}.$
\Procedure{spectrum\textunderscore appx}{$A$,$k$}
\State Sample $G\in \R^{k\times k}$ with i.i.d. $\mathcal{N}(0,1/k)$ entries.
\State $S \leftarrow G A G^T$
\State For $i=1, \ldots, k$, let $\alpha_i = \lambda_i(S) - \frac{1}{k}\Tr(S)$
\State For $i=k+1,\ldots, d$, let $\alpha_i=0$
\State \Return $\alpha_1, \ldots, \alpha_d$ sorted in decreasing order
\EndProcedure
\end{algorithmic}
\end{algorithm}

\begin{thm}
\label{thm:spectral_apx}
Let $A\in \R^{d\times d}$ be symmetric (not necessarily PSD) with eigenvalues $\lambda_1 \geq \ldots \geq \lambda_d$. For $k\geq \Omega(1/\eps^2)$, Algorithm~\ref{alg:main} produces a sequence $(\mu_1,\ldots, \mu_d)$ such that $\abs{\mu_i - \lambda_i} < \eps \norm{A}{F}$ for all $i$ with probability at least $3/5.$
\end{thm}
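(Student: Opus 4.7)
The plan is to exploit the rotational invariance of $G$ to reduce to the diagonal case $A = \diag(\lambda_1,\dots,\lambda_d)$, so that $S := GAG^T = \sum_j \lambda_j g_j g_j^T$ with $g_j := G e_j \sim \mathcal{N}(0, I_k/k)$ independent. The algorithmic shift $-\tfrac{1}{k}\Tr(S)$ tracks $-\tfrac{1}{k}\Tr(A)$ to high accuracy: since $\Tr(S) = \sum_j \lambda_j \norm{g_j}{2}^2$ is a Gaussian Hutchinson trace estimator, Hanson-Wright gives $|\Tr(S)-\Tr(A)| = O(\norm{A}{F}/\sqrt{k})$ with constant probability, so $|\Tr(S)/k - \Tr(A)/k| = O(\norm{A}{F}/k^{3/2}) = o(\eps\norm{A}{F})$. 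Hence it suffices to show $|\lambda_i(Y) - \lambda_i(A)| \leq O(\eps\norm{A}{F})$ for the centered matrix $Y := S - \tfrac{1}{k}\Tr(A)\cdot I_k$, which has mean zero.

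I would then split $A = A_h + A_t$ at threshold $\tau = c\eps\norm{A}{F}$ for a small constant $c$: the head $A_h$ collects the eigenvalues with $|\lambda_j|\geq \tau$ (of which there are at most $r \leq 1/(c\eps)^2 \leq k$), while $\norm{A_t}{op} \leq \tau$ and $\norm{A_t}{F}\leq \norm{A}{F}$. Write $Y = Y_h + Y_t$ with $Y_\bullet := GA_\bullet G^T - \tfrac{1}{k}\Tr(A_\bullet)I$. For the tail, for each fixed unit $x\in\R^k$ we have $x^T Y_t x = \tfrac{1}{k}\sum_{j\in T}\lambda_j(Z_j^2-1)$ with $Z_j \sim \mathcal{N}(0,1)$ i.i.d., a centered sub-exponential of variance $O(\norm{A_t}{F}^2/k^2)$ and Bernstein boundedness $O(\tau/k)$. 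Combining Bernstein with a standard $O(1)^k$-sized net on $S^{k-1}$ and taking the deviation parameter $t = \Theta(k)$ yields $\norm{Y_t}{op} \leq O(\norm{A}{F}/\sqrt{k}+\tau) = O(\eps\norm{A}{F})$ with high probability, so by Weyl's inequality $|\lambda_i(Y) - \lambda_i(Y_h)| = O(\eps\norm{A}{F})$.

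The central step is showing $|\lambda_i(Y_h) - \lambda_i(A)| \leq O(\eps\norm{A}{F})$ for every $i$, where $\lambda_i(A)$ at ``middle'' positions is interpreted as $0$. I would split $A_h = A_h^+ - A_h^-$ into its PSD positive and negative parts and apply the singular-value concentration (\ref{eqn:singular}) to $B = A_h^{\pm, 1/2}$, using $GA_h^\pm G^T = (GA_h^{\pm,1/2})(GA_h^{\pm,1/2})^T$ so that $\lambda_i(GA_h^\pm G^T) = \sigma_i(GA_h^{\pm,1/2})^2$ is controlled via (\ref{eqn:singular}), the key point being that this concentration scales with $\norm{B}{F}^2/\sqrt{k}$ rather than with $\norm{A_h}{op}^2$. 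The algorithmic subtraction of $\tfrac{1}{k}\Tr(A_h)$ removes the leading mean-shift of $GA_h G^T$, and the elementary estimate $|\Tr(A_h)/k| \leq \sqrt{r}\norm{A_h}{F}/k = O(\eps\norm{A}{F})$ (using $\sqrt{r}\leq 1/(c\eps)$ and $k\geq \Omega(1/\eps^2)$) ensures that the ``middle'' eigenvalues of $Y_h$, which cluster near $-\tfrac{1}{k}\Tr(A_h)$, are themselves of magnitude $O(\eps\norm{A}{F})$ and can be paired with the small eigenvalues of $A$.

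The hard part is the head step: a naive Weyl-type bound $|\lambda_i(VD_hV^T) - \lambda_i(A_h)| \leq \norm{A_h}{op}\cdot O(\sqrt{r/k})$ (with $V = [g_j]_{j\in H}\in\R^{k\times r}$ and $D_h = \diag(\lambda_j)_{j\in H}$) loses a factor of $\norm{A_h}{op}$ that can be as large as $\norm{A}{F}$, and so fails for ``unequalized'' heads in which one eigenvalue dominates the others. Replacing this $\norm{A_h}{op}$-scaling with the tighter $\norm{A_h}{F}$-scaling is exactly what the new concentration (\ref{eqn:singular}) is designed to do, and it is at this step that the Gaussianity of $G$ enters in an essential way---both for the exact identity $\E[GBG^T] = \tfrac{1}{k}\Tr(B)\cdot I$ that justifies the algorithmic trace subtraction, and for the sharp concentration of all $k$ eigenvalues of the centered Wishart-type matrix $Y_h$ around this common mean.
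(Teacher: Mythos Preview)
Your high-level architecture matches the paper's: split $A=A_h+A_t$ at threshold $\Theta(\eps\norm{A}{F})$, control the tail via Hanson--Wright plus a net on $S^{k-1}$ (this is exactly Lemma~\ref{lem:tail_bound}), and observe that $\abs{\Tr(A_h)}/k \leq \sqrt{r}\,\norm{A_h}{F}/k = O(\eps\norm{A}{F})$ so the trace shift is harmless on the head. The tail and trace-correction steps are correct and essentially identical to the paper's.

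The head step, however, has a genuine gap. You propose to control $\lambda_i(GA_h^+G^T)$ and $\lambda_i(GA_h^-G^T)$ separately via (\ref{eqn:singular}) with $B=(A_h^\pm)^{1/2}$, but you never say how to pass from these to $\lambda_i(GA_hG^T)$. For positive $\lambda_\ell(A_h)$ the upper bound $\lambda_\ell(GA_hG^T)\leq\lambda_\ell(GA_h^+G^T)$ is free (subtracting a PSD matrix only helps), but the \emph{lower} bound is not: subtracting $GA_h^-G^T$ can depress $\lambda_\ell$ by as much as $\norm{GA_h^-G^T}{}\approx\norm{A_h^-}{}$, which may be $\Theta(\norm{A}{F})$ rather than $O(\eps\norm{A}{F})$. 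This interaction between the positive and negative parts after sketching is precisely the crux of the argument, and the paper devotes its main technical lemma (Lemma~\ref{lem:sketched_eval_lower_bound}) to it: one constructs an $\ell$-dimensional subspace $S_\ell$ that (i) lies in the orthogonal complement of the top $r\approx 2\ell$ eigendirections of $GA_h^-G^T$, killing the large negative contribution; (ii) is statistically independent of the remaining small-negative part, so that Lemma~\ref{lem:op_norm_over_subspace_upper_bound} bounds its quadratic form on $S_\ell$ by $O(\ell/(k\sqrt{r}))=O(\eps)$; and (iii) still certifies $u^TGA_h^+G^Tu\geq\lambda_\ell(A_h)-O(\eps)$ via the PSD lower bound (Lemma~\ref{lem:psd_lower_bound}) applied after projecting out those $r$ directions. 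Your proposal contains none of this construction. There is also a scaling mismatch in your invocation of (\ref{eqn:singular}): with $B=(A_h^+)^{1/2}$ one has $\norm{B}{F}^2=\Tr(A_h^+)$, which can be as large as $\sqrt{r}\,\norm{A_h}{F}=\Theta(\norm{A}{F}/\eps)$, so the error $\norm{B}{F}^2/\sqrt{k}$ you cite is $\Theta(\norm{A}{F})$, not $O(\eps\norm{A}{F})$; the paper's Lemmas~\ref{lem:sketched_eval_upper_bound} and~\ref{lem:sketched_eval_lower_bound} are stated directly for indefinite $D$ with error $O(\eps\norm{D}{F})$, and it is these---not the informal display (\ref{eqn:singular})---that carry the argument.
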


\subsection{Proof Outline}
A natural idea is to split the spectrum of $A$ into two pieces, $A_1$ and $A_2$, where $A_1$ consists of the large eigenvalues of $A$ which are at least $\eps \norm{A}{F}$ in magnitude, and where $A_2$ contains the remaining spectral tail.  The eigenvalues of $G A_2 G^T$ will all concentrate around $\Tr(A)$ up to $O(\eps)$ additive error.

We are then left with showing that the eigenvalues of $GA_1G^T$ are $O(\eps)$ additive approximations to the nonzero eigenvalues of $A_1.$  In order to do this we prove upper and lower bounds on the eigenvalues of $GA_1G^T$.  For the upper bound (or lower bound if $\lambda_{\ell}(A_1)$ is negative) we give a general upper bound on the operator norm of $GMG^T$ for a PSD matrix $M$ with $\norm{M}{F}\leq 1.$ By applying this result to various deflations of $A_1$ we are able to give an upper bound on all eigenvalues of $A_1$ simultaneously.

For the lower bound, we first prove the analogous result in the PSD case where it is much simpler. We then upgrade to the general result.  To get a lower bound on $\lambda_{\ell}(GDG^T)$ in the general case, we construct an $\ell$ dimensional subspace $S_{\ell}$ so that $u^T GDG^T u$ is large for all unit vectors $u$ in $S_{\ell}.$  A natural choice would be to take  $S_{\ell}$ to be the image of $GD_{+,\ell}G^T,$ where $D_{+,\ell}$ refers to $D$ with all but the top $\ell$ positive eigenvalues zeroed out.  We would then like to argue that the quadratic form associated to $GD_{-}G^T$ is small in magnitude uniformly over $S_{\ell}$.  Unfortunately it need not be as small as we require, due to the possible presence of large negative eigenvalues in $D_{-}.$  We therefore restrict our choice of $S_{\ell}$ to lie in the orthogonal complement of the largest $r$ negative eigenvectors of $GD_{-}G^T$.  Since we restrict the choice of $S_{\ell}$ we incur a cost, which damages our lower bound on $\lambda_{\ell}(GD_{+}G^T)$ slightly.  However by choosing $r$ carefully, we achieve a lower bound on $\lambda_{\ell}(GDG^T)$ of $\lambda_{\ell}(D) - O(\eps).$

\section{Proof of Theorem~\ref{thm:spectral_apx}}

In this section and the next, we provide upper and lower bounds on the eigenvalues of a sketched $d\times d$ matrix.  We emphasize the results below will later be applied only to the matrix $A_1$ which is rank $O(1/\eps^2).$  Hence we will use the results below for $d=O(1/\eps^2).$

\subsection{Upper bounds on the sketched eigenvalues}

The following result is a consequence of Theorem 1 in \cite{cohen2015optimal} along with the remark following it.
\begin{thm}
\label{thm:nelson_woodruff}
Let $G\in \R^{m\times n}$ have i.i.d. $\mathcal{N}(0,1/m)$ entries, and let $A$ and $B$ be arbitrary matrices with compatible dimensions.  With probability at least $1-\delta$,
\[\norm{A^T G^T G B - A^T B}{} \leq \eps \sqrt{\norm{A}{}^2 + \frac{\norm{A}{F}^2}{k}} \sqrt{\norm{B}{}^2 + \frac{\norm{B}{F}^2}{k}},\]

for $m = O(\frac{1}{\eps^2}(k + \log\frac{1}{\delta}))$. 
\end{thm}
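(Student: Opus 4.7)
The plan is to derive the stated bound as an application of the stable-rank approximate-matrix-product framework of Cohen. Two well-known properties of a Gaussian sketch $G\in\R^{m\times n}$ with $m = \Omega((k+\log(1/\delta))/\eps^2)$ rows drive the argument. First, by an $\eps$-net argument on the sphere combined with Hanson--Wright concentration, $G$ is a $(1\pm\eps)$ oblivious subspace embedding for any fixed subspace of dimension at most $2k$. Second, $G$ satisfies the JL moment property, which directly yields the Frobenius-norm approximate matrix product bound $\norm{A^T G^T G B - A^T B}{F} \leq \eps \norm{A}{F}\norm{B}{F}$ with matching probability. The remark following Cohen's Theorem 1 notes precisely that dense Gaussian sketches satisfy the general hypothesis of the theorem, so the stated operator-norm bound follows verbatim.

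To carry out the proof concretely, I would split $A$ and $B$ by truncated SVD at rank $k$: write $A = A^h + A^t$ with $A^h$ the best rank-$k$ approximation, and $B = B^h + B^t$ analogously. Expanding,
\[
A^T G^T G B - A^T B \;=\; \sum_{X,Y \in \{h,t\}} \bigl((A^X)^T G^T G B^Y - (A^X)^T B^Y\bigr).
\]
The head-head term is handled by the subspace embedding property applied to the at-most-$2k$-dimensional subspace containing the column spaces of $A^h$ and $B^h$, giving $\norm{(A^h)^T G^T G B^h - (A^h)^T B^h}{} \leq \eps \norm{A}{}\norm{B}{}$. The tail-tail term uses operator-norm AMM together with the SVD-truncation inequalities $\norm{A^t}{} \leq \norm{A}{F}/\sqrt{k}$ and $\norm{A^t}{F}\leq \norm{A}{F}$, yielding a bound proportional to $\eps \norm{A}{F}\norm{B}{F}/k$. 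The two cross terms are controlled by simultaneously invoking the subspace embedding on the head factor and the JL-type concentration on the tail factor, each contributing on the order of $\eps\norm{A}{}\norm{B}{F}/\sqrt{k}$ or its symmetric counterpart.

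Summing the four pieces and applying Cauchy--Schwarz repackages the total error as $\eps\sqrt{\norm{A}{}^2 + \norm{A}{F}^2/k}\cdot \sqrt{\norm{B}{}^2 + \norm{B}{F}^2/k}$, exactly matching the statement. The main obstacle is the cross terms: the operator norm (not Frobenius norm) has to be controlled using the \emph{same} realization of $G$, combining the multiplicative OSE guarantee on the head side with Frobenius AMM on the tail side. I would handle this via an $\eps$-net over the unit ball of the (at most $2k$-dimensional) head subspace, coupled with per-net-point JL concentration for the tail contribution; a union bound over this net of size $e^{O(k)}$ is absorbed by the additive $\log(1/\delta)$ term in the requirement $m = \Omega((k+\log(1/\delta))/\eps^2)$. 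Once this is in place the decomposition collapses to the clean product form, giving the theorem.
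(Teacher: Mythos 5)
The paper gives no independent proof of this theorem: it is stated as an immediate consequence of Theorem 1 of \cite{cohen2015optimal} together with the remark following it (which verifies that a dense Gaussian matrix with $m = O(\eps^{-2}(k+\log(1/\delta)))$ rows satisfies the hypotheses), and that is exactly the route your first paragraph takes. Your subsequent head/tail SVD decomposition is a reasonable reconstruction of how the cited theorem is itself proved, so the proposal matches the paper's approach; the only caveat is that the tail--tail term does not yield the extra factor of $\eps$ quite as directly as you suggest (a naive operator-norm bound gives $O(\norm{A}{F}\norm{B}{F}/k)$ without the $\eps$), which is precisely the subtlety the cited work handles, but since the paper defers entirely to that reference this does not affect the verdict.
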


\begin{lem}
\label{lem:op_norm_upper_bound}
Let $D\in \R^{d\times d}$ have eigenvalues $\lambda_1 \geq \ldots \geq \lambda_d \geq 0$ where $\norm{D}{F}\leq 1$.  Let $G\in \R^{t\times d}$ have $\mathcal{N}(0,1/t)$ entries.  The bound
\[\norm{GD^{1/2}}{}^2 \leq \lambda_1 + O\left(\frac{1}{\sqrt{m}}\right)\]
holds with probability at least $1-\frac{1}{20}2^{-\min(m,1/\lambda_1^2)},$ provided that $t\geq \Omega(m + d).$
\end{lem}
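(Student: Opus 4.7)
The plan is to apply Theorem~\ref{thm:nelson_woodruff} with $A = B = D^{1/2}$, which yields
\begin{equation*}
\norm{D^{1/2}G^TGD^{1/2} - D}{} \leq \eps\bigl(\lambda_1 + \Tr(D)/k\bigr),
\end{equation*}
with probability at least $1-\delta$ whenever $t = \Omega(\eps^{-2}(k + \log(1/\delta)))$. Since $\norm{GD^{1/2}}{}^2 = \norm{D^{1/2}G^TGD^{1/2}}{}$, the triangle inequality together with $\norm{D}{} = \lambda_1$ gives
\begin{equation*}
\norm{GD^{1/2}}{}^2 \leq \lambda_1 + \eps\bigl(\lambda_1 + \Tr(D)/k\bigr),
\end{equation*}
so the problem reduces to choosing $\eps$ and $k$ so that the error is $O(1/\sqrt{m})$ with $t = O(m+d)$, while matching the failure probability $\tfrac{1}{20}2^{-\min(m,1/\lambda_1^2)}$.

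I would case-split on whether $\lambda_1 \leq 1/\sqrt{m}$, matching the two branches of $\min(m, 1/\lambda_1^2)$. In the small-$\lambda_1$ regime, take $\eps$ to be an absolute constant and $k = \Theta(\Tr(D)\sqrt{m})$: then both $\eps\lambda_1$ and $\eps\Tr(D)/k$ are $O(1/\sqrt{m})$, and using the Cauchy--Schwarz bound $\Tr(D) \leq \sqrt{d}\,\norm{D}{F} \leq \sqrt{d}$ together with $\log(1/\delta) = \Theta(m)$ gives $t = O(\sqrt{dm} + m) = O(m+d)$ via AM--GM. In the large-$\lambda_1$ regime, set $\eps \asymp 1/(\lambda_1\sqrt{m})$ so that $\eps\lambda_1 = O(1/\sqrt{m})$, choose $k$ so that $\Tr(D)/k \lesssim \lambda_1$, and use $\log(1/\delta) = 1/\lambda_1^2$, so that the probability-related contribution $\eps^{-2}\log(1/\delta) = \lambda_1^2 m \cdot 1/\lambda_1^2 = m$ fits in $O(m+d)$.

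The main obstacle is controlling the $\Tr(D)/k$ contribution in the second case without inflating $t$: naively using $\Tr(D) \leq \sqrt{d}$ forces $k \asymp \sqrt{d}/\lambda_1$ and hence $\eps^{-2}k \gtrsim \lambda_1\sqrt{d}\,m$, which can exceed $O(m+d)$. I would resolve this by decomposing $D = D_h + D_\ell$ spectrally, where $D_h$ is supported on the eigenvalues of $D$ at least $\lambda_1/2$. Since $\norm{D}{F}\leq 1$ forces at most $4/\lambda_1^2$ such eigenvalues, we get $\Tr(D_h) \leq 4/\lambda_1$; applying Theorem~\ref{thm:nelson_woodruff} to $D_h^{1/2}$ with $k \asymp 1/\lambda_1^2$ then requires only $t = O(m)$ and yields $\norm{GD_h^{1/2}}{}^2 \leq \lambda_1 + O(1/\sqrt{m})$. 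The residual light piece $D_\ell$, whose top eigenvalue is at most $\lambda_1/2$, is handled by iterating the same decomposition geometrically down the spectrum and summing a telescoping collection of such bounds, delivering the claimed additive estimate within total sketching dimension $O(m+d)$.
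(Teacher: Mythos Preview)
Your heavy/light spectral split $D = D_h + D_\ell$ at threshold $\lambda_1/2$ is exactly the decomposition the paper uses, and your bound on $\norm{GD_h^{1/2}}{}^2$ via Theorem~\ref{thm:nelson_woodruff} with $k \asymp 1/\lambda_1^2$ matches the paper's first step. The gap is in your last sentence: separate bounds on $\norm{GD_h^{1/2}}{}^2$ and $\norm{GD_\ell^{1/2}}{}^2$ do \emph{not} combine into a bound on $\norm{GD^{1/2}}{}^2$. For a unit vector $x = x_h + x_\ell$ split along the two eigenspaces,
\[
\norm{GD^{1/2}x}{}^2 = \norm{GD_h^{1/2}x_h}{}^2 + \norm{GD_\ell^{1/2}x_\ell}{}^2 + 2\inner{GD_h^{1/2}x_h}{GD_\ell^{1/2}x_\ell},
\]
and with only Cauchy--Schwarz on the cross term the maximum over $x$ is $\norm{GD_h^{1/2}}{}^2 + \norm{GD_\ell^{1/2}}{}^2$, not $\lambda_1 + O(1/\sqrt{m})$. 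Iterating over dyadic shells only accumulates these losses---the top-eigenvalue bounds sum to $\lambda_1(1 + \tfrac12 + \tfrac14 + \cdots) = 2\lambda_1$---and nothing telescopes.

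The paper does not iterate. It makes a single split $D_1 + D_2$ and applies Theorem~\ref{thm:nelson_woodruff} three times: once for $D_1$ (your heavy bound), once for $D_2$ with $\eps = 1/4$ and $k = d$ to get $\norm{D_2^{1/2}G^TGD_2^{1/2}}{} \leq \tfrac34\lambda_1$, and crucially once \emph{asymmetrically} with $A = D_1^{1/2}$, $B = D_2^{1/2}$ to bound the cross operator $\norm{D_1^{1/2}G^TGD_2^{1/2}}{}$ by $O(\sqrt{\lambda_1}/m^{1/4})$. These three numbers populate a $2\times 2$ quadratic form in $(\norm{x_h}{}, \norm{x_\ell}{})$ whose largest eigenvalue is then computed directly to be $\lambda_1 + O(1/\sqrt{m})$. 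The nontrivial cross-term control, exploiting the $A\neq B$ case of Theorem~\ref{thm:nelson_woodruff}, is the idea your proposal is missing.
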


\begin{proof}
We first decompose $D$ into two parts $D = D_1 + D_2$ where $D_1$  contains the eigenvalues of $D$ larger than $\lambda_1/2$ and $D_2$ contains the eigenvalues which are at most $\lambda_1/2.$  Let $x$ be an arbitrary unit vector and partition its support according to $D_1$ and $D_2$ so that $x = x_1 + x_2$.  This allows us to write
\begin{align*}
    x^T D^{1/2} G^T G D^{1/2} x 
    &= x_1^T D_1^{1/2} G^T G D_1^{1/2} x_1 + x_2^T D_2^{1/2} G^T G D_2^{1/2} x_2 \\
    &\hspace{0.5cm} + 2 x_1^T D_1^{1/2} G^T G D_2^{1/2} x_2\\
    &\leq \norm{x_1}{}^2 \norm{D_1^{1/2}G^T G D_1^{1/2}}{} + \\
    &\hspace{0.5cm}\norm{x_2}{}^2 \norm{D_2^{1/2}G^T G D_2^{1/2}}{} \\
    &\hspace{0.5cm}+ 2 \norm{x_1}{}\norm{x_2}{} \norm{D_1^{1/2}G^T G D_2^{1/2}}{}.
\end{align*}
We bound each of these operator norms in turn by using Theorem~\ref{thm:nelson_woodruff} above.

Note that $D_1$ has support of size at most $4/\lambda_1^2$ since $\norm{D_1}{F}^2\leq 1,$ and so $\Tr(D_1)\leq \frac{4}{\lambda_1}.$ Taking $k=\frac{1}{\lambda_1^2}$, $\eps=\frac{1}{\sqrt{m}\lambda_1}$, and $\delta=\frac{1}{60} 2^{-1/\lambda_1^2}$ in Theorem~\ref{thm:nelson_woodruff} and applying the triangle inequality, we get
\begin{align*}
\norm{D_1^{1/2} G^T G D_1^{1/2}}{} 
&\leq \lambda_1 + \eps\left(\norm{D_1^{1/2}}{}^2 + \frac{\norm{D_1^{1/2}}{F}^2}{k}\right)\\
&\leq \lambda_1 + \eps\left(\lambda_1 + \frac{\Tr(D_1)}{k}\right)\\
&\leq \lambda_1 + \eps\left(\lambda_1 + \frac{4}{\lambda_1 k}\right)\\
&\leq \lambda_1 + \frac{5}{\sqrt{m}}
\end{align*}

Similarly for the second term, we note that $\Tr(D_2)\leq \frac{\lambda_1}{2}n$, and apply Theorem~\ref{thm:nelson_woodruff} with $k=d$, $\eps=1/4$, and $\delta = \frac{1}{60} 2^{-m}$ to get
\begin{align*}
\norm{D_2^{1/2}G^T G D_2^{1/2}}{} 
&\leq \frac{\lambda_1}{2} + \eps\left(\frac{\lambda_1}{2} + \frac{\Tr(D_2)}{k}\right)\\
&\leq \frac{\lambda_1}{2} + \frac{1}{4}\left(\frac{\lambda_1}{2} + \frac{\Tr(D_2)}{d}\right)\\
&\leq \frac{\lambda_1}{2} + \frac{1}{4}\left(\frac{\lambda_1}{2} + \frac{\lambda_1}{2}\right)\\
&= \frac34 \lambda_1.
\end{align*}

For the third term we choose $k=\sqrt{d}/\lambda_1$, $\eps = 1/(\sqrt{\lambda_1} m^{1/4})$, and $\delta=\frac{1}{60} 2^{-\sqrt{m}/\lambda_1}$ which gives
\begin{align*}
    \norm{D_1^{1/2} G^T G D_2^{1/2}}{}
    &\leq \eps \sqrt{\lambda_1 + \frac{\Tr(D_1)}{k}} \sqrt{\frac{\lambda_1}{2} + \frac{\Tr(D_2)}{k}}\\
    &\leq \eps \sqrt{\lambda_1 + \frac{\sqrt{d}}{k}} \sqrt{\frac{\lambda_1}{2} + \frac{\sqrt{d}}{k}}\\
    &\leq \eps \left(\lambda_1 + \frac{\sqrt{d}}{k} \right)\\
    &\leq 2 \frac{\sqrt{\lambda_1}}{m^{1/4}}.
\end{align*}

Note that each application of Theorem~\ref{thm:nelson_woodruff} above allows $G$ to have have $\Theta(m)$ rows provided that $m\geq d.$ Also note that each failure probability above is bounded by $\frac{1}{60} 2^{-\min(m,1/\lambda_1^2)}$, since $\frac{\sqrt{m}}{\lambda_1} \geq \min(m, \frac{1}{\lambda_1^2}).$

Thus we conclude with probability at least $1-\frac{1}{20}2^{-\min(m,1/\lambda_1^2)}$, that
\begin{align*}
    x^T D^{1/2} G^T G D^{1/2} x 
    \leq \left(\lambda_1+\frac{5}{\sqrt{m}}\right)\norm{x_1}{}^2 + \frac{3}{4}\lambda_1\norm{x_2}{}^2 + 4\frac{\sqrt{\lambda_1}}{m^{1/4}}\norm{x_1}{}\norm{x_2}{}.
\end{align*}
We view the right-hand expression as a quadratic form applied to the unit vector $(\norm{x_1}{}, \norm{x_2}{}).$ So its value is bounded by the largest eigenvalue of the $2\times 2$ matrix
\begin{equation*}
M=
\begin{pmatrix}
\lambda_1 + \frac{5}{\sqrt{m}} & \frac{2\sqrt{\lambda_1}}{m^{1/4}} \\
\frac{2\sqrt{\lambda_1}}{m^{1/4}} & \frac34 \lambda_1
\end{pmatrix}.
\end{equation*}
Suppose that $\lambda_1 + \beta$ with $\beta\geq 0$ is an eigenvalue of $M.$  Then plugging into the characteristic polynomial gives
\[
\frac{4\lambda_1}{\sqrt{m}} = \left(\beta - \frac{5}{\sqrt{m}}\right)\left(\beta + \frac{\lambda_1}{4}\right) \geq \frac{\lambda_1}{4}\left(\beta - \frac{5}{\sqrt{m}}\right),
\]
from which it follows that $\beta \leq O\left(\frac{1}{\sqrt{m}}\right)$ as desired.

\end{proof}

\begin{lem}
\label{lem:sketched_eval_upper_bound}
Let $D\in \R^{d\times d}$ (not necessarily PSD) have $\norm{D}{F}\leq 1,$ and suppose $\lambda_{\ell}(D) \geq 0.$  Let $G\in \R^{k\times d}$ have i.i.d. $\mathcal{N}(0,1/k)$ entries. Then with probability at least $1 - \frac{1}{20}2^{-\min(\ell, \eps^{-2})}$,
\[
\lambda_{\ell}(GDG^T) \leq \lambda_{\ell}(D) + O\parens{\eps},
\]
for $k \geq \Omega(d + \frac{1}{\eps^2}).$

\end{lem}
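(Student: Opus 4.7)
The idea is to reduce the non-PSD lemma to the PSD case (Lemma~\ref{lem:op_norm_upper_bound}) by a deflate-and-drop argument. Write the eigendecomposition of $D$ and split $D = D^{\mathrm{head}} + D^{\mathrm{tail}}$, where $D^{\mathrm{head}}$ retains only the top $\ell-1$ eigenvalues of $D$. Because we are given $\lambda_\ell(D)\geq 0$, the first $\ell-1$ eigenvalues are nonnegative, so $D^{\mathrm{head}}$ is PSD of rank at most $\ell-1$, and hence $GD^{\mathrm{head}}G^T$ is PSD of rank at most $\ell-1$. Weyl's inequality then yields
\[
\lambda_\ell(GDG^T)\;\leq\;\lambda_\ell(GD^{\mathrm{head}}G^T)+\lambda_1(GD^{\mathrm{tail}}G^T)\;=\;\lambda_1(GD^{\mathrm{tail}}G^T).
\]

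Next I would handle the fact that $D^{\mathrm{tail}}$ is still not PSD in general. Decompose $D^{\mathrm{tail}}=D^{\mathrm{tail}}_+ - D^{\mathrm{tail}}_-$ into its positive and negative parts (both PSD). Then $GD^{\mathrm{tail}}_-G^T\succeq 0$, so in the Loewner order $GD^{\mathrm{tail}}G^T\preceq GD^{\mathrm{tail}}_+G^T$, and therefore
\[
\lambda_1(GD^{\mathrm{tail}}G^T)\;\leq\;\lambda_1(GD^{\mathrm{tail}}_+G^T)\;=\;\bigl\|G(D^{\mathrm{tail}}_+)^{1/2}\bigr\|^2.
\]
The matrix $D^{\mathrm{tail}}_+$ is PSD with $\|D^{\mathrm{tail}}_+\|_F\leq\|D\|_F\leq 1$ and with top eigenvalue equal to $\lambda_\ell(D)$, so I can apply Lemma~\ref{lem:op_norm_upper_bound} with parameter $m=\Theta(1/\eps^2)$ to get
\[
\bigl\|G(D^{\mathrm{tail}}_+)^{1/2}\bigr\|^2\;\leq\;\lambda_\ell(D)+O(\eps)
\]
with failure probability at most $\tfrac{1}{20}2^{-\min(m,\,1/\lambda_\ell(D)^2)}$, provided $k\geq \Omega(m+d)=\Omega(d+1/\eps^2)$, which is exactly the stated hypothesis.

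Finally I would reconcile the failure probability with the claimed $\tfrac{1}{20}2^{-\min(\ell,\,\eps^{-2})}$. Because $\lambda_1(D)\geq\cdots\geq \lambda_\ell(D)\geq 0$, the top $\ell$ eigenvalues of $D$ are all at least $\lambda_\ell(D)$ in magnitude, so
\[
\ell\,\lambda_\ell(D)^2\;\leq\;\sum_{i=1}^{\ell}\lambda_i(D)^2\;\leq\;\|D\|_F^2\;\leq\;1,
\]
and thus $1/\lambda_\ell(D)^2\geq \ell$. Hence $\min(m,1/\lambda_\ell(D)^2)\geq\min(1/\eps^2,\ell)$, which matches the required bound.

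\textbf{Main obstacle.} The argument is short once set up, and the only subtle point is choosing the right deflation: one might be tempted to apply a PSD bound directly to $D_+$, but its top eigenvalue is $\lambda_1(D)$, not $\lambda_\ell(D)$, which would give a far too weak estimate. Deflating by the top $\ell-1$ eigenvectors first (so that the surviving positive spectrum has operator norm exactly $\lambda_\ell(D)$) and only \emph{then} dropping the negative part via Loewner monotonicity is what makes the bound tight. The remaining work, matching the probability and dimension hypotheses of Lemma~\ref{lem:op_norm_upper_bound}, is bookkeeping driven by the Frobenius-norm constraint on $D$.
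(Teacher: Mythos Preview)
Your proof is correct and essentially identical to the paper's: both reduce to applying Lemma~\ref{lem:op_norm_upper_bound} to the same PSD matrix $D^{\mathrm{tail}}_+ = D_{+,-(\ell-1)}$ (the nonnegative eigenvalues of $D$ from index $\ell$ onward), and both conclude via the same $\lambda_\ell(D)\le 1/\sqrt{\ell}$ bound on the failure probability. The only cosmetic difference is the order of operations---the paper first drops the negative part via Loewner monotonicity and then deflates using Courant--Fischer on $D_+^{1/2}G^TGD_+^{1/2}$, whereas you first deflate via Weyl's inequality (using that $GD^{\mathrm{head}}G^T$ has rank $<\ell$) and then drop the negative part---but the chain of inequalities is the same.
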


First we have the following, where $D_+$ and $D_{-}$ denote the positive and negative semi-definite parts of $D$:
\begin{align*}
    \lambda_{\ell}(G D G^T) &= \lambda_{\ell}(G D_{+} G^T - GD_{-}G^T)\\
    &\leq \lambda_{\ell}(GD_+ G^T)\\
    &=\lambda_{\ell}(D_+^{1/2} G^T G D_+^{1/2}).
\end{align*}
Let $S_{d - \ell + 1}$ be the span of a set of eigenvectors of $D$ corresponding to $\lambda_{\ell}(D),\ldots, \lambda_d(D).$ Then by Courant-Fischer\footnote{For example see \cite{vershynin2018high} for a statement of the Courant-Fischer minimax theorem.},
\begin{align*}
    \lambda_{\ell}(G D G^T)
    &\leq \max_{v\in S_{d-\ell+1}, \norm{v}{}=1} v^T D_{+}^{1/2} G^T G D_{+}^{1/2}v\\
    &= \max_{v\in S_{d-\ell+1}, \norm{v}{}=1} \norm{GD_{+}^{1/2}v}{}^2\\
    &= \norm{GD_{+,-(\ell-1)}^{1/2}}{}^2,
\end{align*}
where $D_{+,-(\ell-1)}$ is $D_+$ with the top $\ell-1$ eigenvalues zeroed out. Now Lemma~\ref{lem:op_norm_upper_bound} applies, and gives 
\[
\lambda_{\ell}(GDG^T) \leq \lambda_{\ell}(D_+) + O\parens{\eps} = \lambda_{\ell}(D) + O\parens{\eps},
\]
with probability at least $1 - \frac{1}{20}2^{-\min(1/\eps^2,1/\lambda_{\ell}(D)^2)},$ for $k \geq \Omega(d + \frac{1}{\eps^2}).$  Finally, note that $\lambda_{\ell}(D) \leq \frac{1}{\sqrt{\ell}},$ so 
\[2^{-\min(1/\eps^2,1/\lambda_{\ell}(D)^2)} \leq 2^{-\min(1/\eps^2,\ell)}.\]

\subsection{Lower bounds on the sketched eigenvalues}

\begin{lem}
\label{lem:op_norm_over_subspace_upper_bound}
Let $M\in \R^{d\times d}$ be a PSD matrix with $\norm{M}{F}\leq 1.$  Let $G\in \R^{m\times d}$ have i.i.d. $\mathcal{N}(0,\frac1m)$ entries, where $m\geq \Omega(d + \log(1/\delta))$.  Also let $S_{\ell}$ denote an arbitrary $\ell$ dimensional subspace of $\R^m.$  Then with probability at least $1-\delta$, we have
\[ 
\max_{v\in S_{\ell}, \norm{v}{}=1} v^T G M G^T v \leq 3\frac{\ell}{m}\norm{M}{}.
\]
\end{lem}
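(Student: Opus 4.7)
My plan is a two-step argument: first reduce the supremum over $S_\ell$ to the operator norm of a smaller Gaussian quadratic form via rotational invariance, and then apply Theorem~\ref{thm:nelson_woodruff} to bound the deviation of this quadratic form from its mean.

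Fix an orthonormal basis $U \in \R^{m \times \ell}$ for $S_\ell$. Parameterizing $v = Ux$ with $\norm{x}{}=1$ yields
\[
\max_{v \in S_\ell,\,\norm{v}{}=1} v^T G M G^T v \;=\; \norm{U^T G M G^T U}{} \;=\; \norm{U^T G M^{1/2}}{}^2.
\]
Rotational invariance of the i.i.d.\ Gaussian distribution implies that $\tilde G := U^T G$ is distributed as an $\ell \times d$ matrix with i.i.d.\ $\mathcal{N}(0, 1/m)$ entries, so it suffices to bound $\norm{\tilde G M \tilde G^T}{}$. Rescaling to $G' := \sqrt{m/\ell}\,\tilde G$ gives an $\ell \times d$ matrix with i.i.d.\ $\mathcal{N}(0, 1/\ell)$ entries, matching the variance convention of Theorem~\ref{thm:nelson_woodruff}. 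Applying that theorem with $A = B = M^{1/2}$, a small absolute constant for its error parameter $\eps$, and an internal parameter $k$ proportional to the stable rank of $M$, I obtain
\[
\norm{M^{1/2}(G')^T G' M^{1/2} - M}{} \;\leq\; \eps\Bigl(\norm{M}{} + \tfrac{\Tr(M)}{k}\Bigr)
\]
with probability at least $1-\delta$.

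Translating back through $(G')^T G' = (m/\ell)\,\tilde G^T \tilde G$ and invoking the triangle inequality gives
\[
\norm{\tilde G M \tilde G^T}{} \;\leq\; \tfrac{\ell}{m}(1+\eps)\norm{M}{} + \tfrac{\ell\,\eps}{m k}\Tr(M).
\]
Choosing $\eps$ small enough and $k$ a constant multiple of the stable rank $\Tr(M)/\norm{M}{}$ of $M$ absorbs the trace term into the leading $\norm{M}{}$, yielding the advertised bound $3(\ell/m)\norm{M}{}$.

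The main obstacle is verifying that the row-count constraint $\ell \gtrsim (k + \log(1/\delta))/\eps^2$ imposed by Theorem~\ref{thm:nelson_woodruff} is compatible with choosing $k$ at the stable rank of $M$. Because $\norm{M}{F} \leq 1$ forces the stable rank of $M$ to be at most $d$, the hypothesis $m \geq \Omega(d + \log(1/\delta))$ provides the necessary slack; this is the regime in which the lemma is subsequently invoked. As a sanity check, the crude spectral bound $\norm{\tilde G M \tilde G^T}{} \leq \norm{\tilde G}{}^2 \norm{M}{}$, combined with the standard tail estimate $\norm{\tilde G}{}^2 \leq O((\ell + d + \log(1/\delta))/m)$ for an $\ell \times d$ Gaussian matrix with variance $1/m$, recovers the desired conclusion up to constants whenever $\ell$ dominates $d + \log(1/\delta)$, and it is this slack (controlled by the $\|M\|_F \leq 1$ hypothesis) that makes the $3\ell\norm{M}{}/m$ shape of the bound plausible.
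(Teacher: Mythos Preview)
Your approach is the paper's: reduce via rotational invariance to $(\ell/m)\,\norm{M^{1/2}\tilde G^T\tilde G M^{1/2}}{}$ for an $\ell\times d$ Gaussian $\tilde G$ with variance $1/\ell$, then invoke Theorem~\ref{thm:nelson_woodruff}. The paper simply takes $(\eps,k)=(1,d)$ and uses $\Tr(M)\le d\,\norm{M}{}$ to absorb the trace term; you instead pick $k$ at the stable rank---a cosmetic difference in parameter choice.

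You correctly flag the obstacle but resolve it incorrectly. Theorem~\ref{thm:nelson_woodruff} imposes its row-count requirement on the Gaussian it is applied to; here that Gaussian is $\tilde G$ (equivalently your $G'$), which has $\ell$ rows, not $m$. The hypothesis $m\ge\Omega(d+\log(1/\delta))$ in the lemma says nothing about $\ell$, and the lemma as stated allows $\ell=1$. Your sanity-check route via $\norm{\tilde G}{}^2\norm{M}{}$ has the same defect, as you yourself concede it only yields the $O(\ell/m)\norm{M}{}$ shape when $\ell\gtrsim d+\log(1/\delta)$. The paper's proof commits the identical slip (it applies Theorem~\ref{thm:nelson_woodruff} to $\tilde G$ with $(\eps,k)=(1,d)$ without verifying $\ell\gtrsim d+\log(1/\delta)$), and indeed the lemma as written is false for small $\ell$: take $M=d^{-1/2}I_d$ and $\ell=1$, so that $\E\bigl[v^TGMG^Tv\bigr]=\Tr(M)/m=\sqrt d/m$, which already exceeds the claimed bound $3\ell\norm{M}{}/m=3/(m\sqrt d)$ once $d>3$. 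The downstream application in Lemma~\ref{lem:sketched_eval_lower_bound} avoids this bad regime, but neither the stated hypotheses nor either proof encode that restriction.
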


\begin{proof}
Let $\Pi\in \R^{m\times \ell}$ has columns forming an orthonormal basis of $S_{\ell}.$   Then we can write
\[
\max_{v\in S_{\ell}, \norm{v}{}=1} v^T G M G^T v = \norm{\Pi^T G M G^T \Pi}{}.
\]
Using rotational invariance of $G$ we note that $\Pi^T G$ is distributed as $\sqrt{\frac{\ell}{m}}\tilde{G}$ where $\tilde{G} \in \R^{\ell \times d}$ has i.i.d. $\mathcal{N}(0,\frac{1}{\ell})$ entries. Then
\[
\norm{\Pi^T G M G^T \Pi}{} = \frac{\ell}{m}\norm{\tilde{G}M\tilde{G}^T}{} = \frac{\ell}{m}\norm{M^{1/2}\tilde{G}^T \tilde{G} M^{1/2}}{},
\]
which by taking $(\eps,k) = (1,d)$ in Theorem~\ref{thm:nelson_woodruff} is bounded by
\begin{align*}
\frac{\ell}{m}\parens{\norm{M}{} + \left(\norm{M^{1/2}}{}^2 + \frac{\norm{M^{1/2}}{F}^2}{d}\right) }
&= \frac{\ell}{m}\parens{\norm{M}{} + \left(\norm{M}{} + \frac{\Tr(M)}{d}\right)} \\
&\leq 3\frac{\ell}{m}\norm{M}{},
\end{align*}
with probability at least $1-\delta.$ Note that we used the bound $\Tr(M) \leq d \norm{M}{}$ in the final step.
\end{proof}

\begin{lem}
\label{lem:psd_lower_bound}
Let $M\in\R^{d\times d}$ be PSD with $\norm{M}{F}\leq 1$, and let $G\in\R^{k\times d}$ have i.i.d. $\mathcal{N}(0,\frac1k)$ entries. 

By choosing $k = \Theta(d + \frac{1}{\eps^2})$ the bound
\[
\lambda_{\ell}(GMG^T) \geq \lambda_{\ell}(M) - \eps
\]
holds with probability at least $1 - \frac{1}{40}2^{-\ell}.$
\end{lem}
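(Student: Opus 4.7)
The plan is to produce, with high probability, an $\ell$-dimensional subspace $S_\ell \subseteq \R^k$ on which the quadratic form $v \mapsto v^T GMG^T v$ is uniformly bounded below by $\lambda_\ell(M) - O(\eps)$, and then invoke the Courant--Fischer min-max principle. Decompose $M = M_\ell + M'$, where $M_\ell = U_\ell \Lambda_\ell U_\ell^T$ captures the top $\ell$ eigenvalues and eigenvectors of $M$ and $M' \succeq 0$ is the remaining tail with $\norm{M'}{} = \lambda_{\ell+1}(M) \leq 1/\sqrt{\ell+1}$ (since $\norm{M}{F} \leq 1$). I take $S_\ell = \mathrm{col}(GU_\ell)$, which also equals the image of $GM_\ell G^T$ when $H := GU_\ell$ has full column rank. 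One may assume $\lambda_\ell(M) > 0$, since otherwise $\lambda_\ell(GMG^T) \geq 0$ by PSDness of $M$ and the claim is immediate.

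For the head contribution, note that $GM_\ell G^T = H\Lambda_\ell H^T$ shares its nonzero eigenvalues with $\Lambda_\ell^{1/2} H^T H \Lambda_\ell^{1/2}$. Since $U_\ell$ has orthonormal columns, $H \in \R^{k \times \ell}$ has i.i.d.\ $\mathcal{N}(0,1/k)$ entries, so standard Wishart concentration will give $\norm{H^T H - I_\ell}{} \leq O(\sqrt{\ell/k})$ with probability at least $1 - \frac{1}{80} 2^{-\ell}$, using $k = \Omega(d + 1/\eps^2) \geq \Omega(\ell)$. Restricted to its image $S_\ell$, the quadratic form is therefore bounded below by $\lambda_\ell(GM_\ell G^T) \geq (1 - O(\sqrt{\ell/k})) \lambda_\ell(M)$; combined with the Frobenius-norm bound $\lambda_\ell(M) \leq 1/\sqrt{\ell}$, the multiplicative slack costs us at most $O(1/\sqrt{k}) = O(\eps)$.

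For the tail, the key structural observation is that $H$ and $G_\perp := G(I - U_\ell U_\ell^T)$ are independent by rotational invariance of the Gaussian, so $S_\ell$ is independent of $GM'G^T = G_\perp M' G_\perp^T$. Since $M' \succeq 0$, controlling the tail reduces to an upper bound on $\max_{v \in S_\ell,\, \norm{v}{}=1} v^T G_\perp M' G_\perp^T v$. Conditioning on $S_\ell$ and applying Lemma~\ref{lem:op_norm_over_subspace_upper_bound} to the compressed matrix $V^T M' V$ (with $V \in \R^{d \times (d-\ell)}$ an orthonormal basis of the tail eigenspace) together with the i.i.d.\ Gaussian matrix $G_\perp V$, at failure parameter $\delta = \frac{1}{80} 2^{-\ell}$, produces the uniform bound $3(\ell/k)\, \norm{M'}{} \leq 3\sqrt{\ell}/k = O(\eps)$.

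Combining these two estimates via Courant--Fischer then gives $\lambda_\ell(GMG^T) \geq \lambda_\ell(M) - O(\eps)$, and a union bound over the two failure events yields the claimed $1 - \frac{1}{40} 2^{-\ell}$ success probability (after rescaling $\eps$ by a constant). The main obstacle to overcome is the data dependence: $S_\ell$ is not fixed in advance but depends on $G$, so Lemma~\ref{lem:op_norm_over_subspace_upper_bound} cannot be applied naively. Splitting $G$ into its projections along $U_\ell$ and $U_\ell^\perp$ and invoking Gaussian rotational invariance to decouple the head and tail pieces is what makes a clean conditional application of the lemma legitimate, and this independence argument is really the heart of the proof.
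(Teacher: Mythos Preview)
Your head argument is correct and is essentially the paper's proof, phrased dually: the paper works with $M^{1/2}G^TGM^{1/2}$ and invokes the OSE property of $G$ on the $\ell$-dimensional subspace $M^{1/2}(S_\ell)$, while you work with $GM_\ell G^T$ and invoke Wishart concentration for $H^TH$. Both reduce to the same calculation, and both exploit $\lambda_\ell(M)^2 \leq 1/\ell$ to convert a multiplicative $(1 - O(\sqrt{\ell/k}))$ guarantee into the additive $O(\eps)$ bound.

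However, your tail argument is entirely unnecessary here, and the ``main obstacle'' you identify is not actually present in this lemma. Since $M' \succeq 0$, you have $v^T GMG^T v = v^T GM_\ell G^T v + v^T GM'G^T v \geq v^T GM_\ell G^T v$ for every $v$, so the tail can only help the lower bound; there is nothing to control and no data-dependence issue to decouple. An \emph{upper} bound on $v^T GM'G^T v$ over $S_\ell$ is simply irrelevant to a lower bound on the sum. This is exactly what makes the PSD case easy and is why the paper isolates it first. The decoupling idea you describe---splitting $G$ along $U_\ell$ and $U_\ell^\perp$ to make $S_\ell$ independent of the problematic piece---is the engine of the paper's \emph{next} lemma (Lemma~\ref{lem:sketched_eval_lower_bound}), where the negative part of $D$ genuinely must be subtracted and hence bounded from above over $S_\ell$. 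So your instinct is right, just one lemma early.
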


\begin{proof}
Recall that the non-zero eigenvalues of $GMG^T$ coincide with those of $M^{1/2}G^TGM^{1/2},$  so
\[
\lambda_{\ell}(GMG^T) = \lambda_{\ell}(M^{1/2}G^TGM^{1/2}).
\]
By the Courant-Fischer theorem, there exists an $\ell$ dimensional subspace $S_{\ell}$ of $\R^d$ such that $\norm{M^{1/2}x}{}^2 = x^T M x \geq \lambda_{\ell}(M)$ for all $x\in S_{\ell}.$

Now suppose that $G$ is an $(\frac{\eps}{\lambda_{\ell}}, \ell, \frac{1}{40}2^{-\ell})$-OSE\footnote{An $(\eps, k, \delta)$-OSE refers to an oblivious embedding that has $1\pm \eps$ distortion over any given $k$ dimensional subspace with probability at least $1-\delta.$}, which can be achieved by taking
\[
k = \Theta\left(\frac{\lambda_{\ell}^2}{\eps^2} \left(\ell + \log\frac{10}{2^{-\ell}}\right)\right).
\]
Since $\norm{M}{F}\leq 1,$ we have $\lambda_{\ell}^2 \leq \frac{1}{\ell}$, so in fact  $k = O(1/\eps^2)$ above.

Then with probability at least $1-\frac{1}{10}2^{-\ell}$, the bound 
\begin{align*}
\norm{GM^{1/2}x}{}^2 &\geq \left(1-\frac{\eps}{\lambda_{\ell}(M)}\right)\norm{M^{1/2}x}{}^2 \\
&\geq\left(1-\frac{\eps}{\lambda_{\ell}(M)}\right)\lambda_{\ell}(M)\\
&\geq \lambda_{\ell}(M) - \eps
\end{align*}
holds for all $x\in S_{\ell}$.  By the Courant-Fischer theorem,  this implies that $\lambda_{\ell}(M^{1/2}G^TGM^{1/2}) \geq \lambda_{\ell}(M) - \eps$ as desired.
\end{proof}

\begin{lem}
\label{lem:sketched_eval_lower_bound}
Suppose that $D \in \R^{d\times d}$ is a (not necessarily PSD) matrix with $\norm{D}{F}\leq 1$ and that $G\in \R^{k\times d}$ has i.i.d. $\mathcal{N}(0,1/k)$ entries. If $\lambda_{\ell}(D)\geq 0,$ then with probability at least $\frac{1}{20}2^{-\ell}$,
\[
\lambda_{\ell}(GDG^T) \geq \lambda_{\ell}(D) - \eps,
\]
for $k \geq \Omega(d + \frac{1}{\eps^2}).$

\end{lem}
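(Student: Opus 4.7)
The plan is to follow the strategy sketched in Section~2. Decompose $D = D_+ - D_-$ where $D_+$ and $D_-$ are positive semidefinite and supported on orthogonal subspaces of $\R^d$. Since $\lambda_\ell(D) \geq 0$, the top $\ell$ eigenvalues of $D$ coincide with those of $D_+$, so $\lambda_\ell(D_+) = \lambda_\ell(D)$, and moreover $\norm{D_+}{F}, \norm{D_-}{F} \leq \norm{D}{F} \leq 1$. The aim is to construct an $\ell$-dimensional subspace $S_\ell \subseteq \R^k$ on which $v^T GDG^T v \geq (\lambda_\ell(D) - \eps)\norm{v}{}^2$, and then invoke Courant--Fischer.

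To control the negative part, I would pick $r = \Theta(1/\eps^2)$ so that $\lambda_{r+1}(D_-) \leq \eps$; this is possible since $\norm{D_-}{F}^2 \leq 1$ forces $\lambda_{r+1}(D_-)^2 (r+1) \leq 1$. Lemma~\ref{lem:sketched_eval_upper_bound} applied to the PSD matrix $D_-$ then gives $\lambda_{r+1}(GD_-G^T) \leq \lambda_{r+1}(D_-) + O(\eps) = O(\eps)$ with high probability. For the positive part, let $T_{\ell+r} \subseteq \R^k$ denote the span of the top $\ell+r$ eigenvectors of $GD_+G^T$ and let $V_r \subseteq \R^k$ denote the span of the top $r$ eigenvectors of $GD_-G^T$. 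By dimension counting in $\R^k$, the intersection $T_{\ell+r} \cap V_r^\perp$ has dimension at least $\ell$; let $S_\ell$ be any $\ell$-dimensional subspace of it. Then for unit $v \in S_\ell$,
\[
v^T GDG^T v = v^T GD_+G^T v - v^T GD_-G^T v \geq \lambda_{\ell+r}(GD_+G^T) - \lambda_{r+1}(GD_-G^T),
\]
and Lemma~\ref{lem:psd_lower_bound} applied to $D_+$ gives $\lambda_{\ell+r}(GD_+G^T) \geq \lambda_{\ell+r}(D_+) - O(\eps)$.

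The main obstacle is to tie $\lambda_{\ell+r}(D_+)$ back to $\lambda_\ell(D_+) = \lambda_\ell(D)$; a priori $\lambda_{\ell+r}(D_+)$ can drop substantially below $\lambda_\ell(D_+)$ if the positive spectrum has a sharp cliff between positions $\ell$ and $\ell+r$. I expect to resolve this by choosing $r$ \emph{adaptively} as the smallest index with $\lambda_{r+1}(D_-) \leq \eps$ (rather than a fixed $\Theta(1/\eps^2)$), and then doing a short case analysis on the gap $\lambda_\ell(D_+) - \lambda_{\ell+r}(D_+)$. The joint Frobenius constraint $\norm{D_+}{F}^2 + \norm{D_-}{F}^2 \leq 1$ is the lever: a sharp dropoff in $D_+$ across those positions forces enough mass elsewhere that either $\lambda_\ell(D_+) \leq O(\eps)$ (so the desired bound $\lambda_\ell(GDG^T) \geq -O(\eps)$ follows from $\lambda_{\ell+r}(GD_+G^T) \geq -O(\eps)$), or a smaller $r$ already suffices to kill the negative tail.

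Once this spectral bookkeeping is done, Courant--Fischer yields $\lambda_\ell(GDG^T) \geq \lambda_\ell(D) - O(\eps)$, which after absorbing constants into $\eps$ gives the stated bound. A union bound over the two applications of Lemmas~\ref{lem:psd_lower_bound} and~\ref{lem:sketched_eval_upper_bound} (each succeeding with probability at least $1 - O(2^{-\ell})$ for $k \geq \Omega(d + 1/\eps^2)$) trivially exceeds the probability $\tfrac{1}{20} 2^{-\ell}$ required in the statement.
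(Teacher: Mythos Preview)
Your dimension-counting construction of $S_\ell \subseteq T_{\ell+r}\cap V_r^{\perp}$ has a genuine gap that the proposed case analysis cannot close. Consider $\ell=1$ and take $D_+$ and $D_-$ each to have a single nonzero eigenvalue equal to $1/2$, so $\norm{D}{F}^2 = 1/2$. For any $r\geq 1$ we have $\lambda_{\ell+r}(D_+)=0$ and $\lambda_{r+1}(D_-)=0$, so your bound yields only $\lambda_1(GDG^T)\geq -O(\eps)$; for $r=0$ you get $\lambda_1(D_+)-\lambda_1(D_-)-O(\eps)=-O(\eps)$ as well. Neither branch of your dichotomy applies: $\lambda_\ell(D_+)=1/2$ is not $O(\eps)$, and no smaller $r$ kills the negative tail since $\lambda_1(D_-)=1/2$. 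The Frobenius constraint has slack here and provides no leverage. The underlying defect is structural: intersecting with $V_r^\perp$ costs you $r$ \emph{positions} in the positive spectrum, and that loss can wipe out $\lambda_\ell(D_+)$ entirely.

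The paper avoids this by exploiting an independence you have not used. Writing $D$ as diagonal (without loss of generality), the columns of $G$ hitting $D_+$, the top-$r$ part of $D_-$, and the tail of $D_-$ are three \emph{independent} Gaussian blocks $G_1,G_2,G_3$. Let $W_r\subseteq\R^k$ be the column span of $G_2$; it is independent of $G_1$, so by rotational invariance $\Pi_{W_r^\perp}^T G_1$ is a fresh $(k-r)\times d$ Gaussian. Hence $\Pi_{W_r^\perp}^T G D_+ G^T \Pi_{W_r^\perp}$ is distributed as $\bigl(1-\tfrac{r}{k}\bigr)\tilde G D_+ \tilde G^T$, and Lemma~\ref{lem:psd_lower_bound} applied at index $\ell$ (not $\ell+r$) gives an $\ell$-dimensional $S_\ell\subseteq W_r^\perp$ on which the $D_+$ quadratic form is at least $(1-r/k)\lambda_\ell(D_+)-\eps$. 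The cost of projecting onto $W_r^\perp$ is thus only a multiplicative $(1-r/k)$, which is $O(\eps)$ once $r=\Theta(\ell)$ and $k\gtrsim \sqrt{\ell}/\eps$. Since $S_\ell$ depends only on $G_1,G_2$ and is independent of $G_3$, Lemma~\ref{lem:op_norm_over_subspace_upper_bound} then bounds the residual negative term by $O(\ell/(k\sqrt r))\leq O(\eps)$. The choice $r=\lfloor 2\ell\rfloor$ (not $\Theta(1/\eps^2)$) balances these two errors.
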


Throughout the course of this argument we will need the parameters $k$ and $r$ to satisfy various inequalities. To streamline the proof we will list these assumptions here and later verify that they are satisfied with appropriate choices.  The assumptions we will need are as follows:
\begin{enumerate}
    \item $k \geq c_1 d$, where $c_1\geq 1$ is an absolute constant
    \item $k - r \geq \frac{c_2}{\eps^2}$ where $c_2$ is an absolute constant
    \item $\frac{r}{k\sqrt{\ell}} \leq \eps$
    \item $\frac{\ell}{k\sqrt{r}} \leq \eps$
\end{enumerate}

To produce a lower bound on $\lambda_{\ell}(GDG^T)$ we will find a subspace $S$ such that $v^T G D G^T v$ is large for all unit vectors $v$ in $S$.

First  we write $D = D_+ - (D_{-,-r} + D_{-,+r})$ where $D_+$ is the positive semi-definite part of $D$, $D_-$ is the negative semi-definite part of $D$, $D_{-,+r}$ denotes $D_-$ with all but the top $r$ eigenvalues zeroed out, and $D_{-,-r} = D_- - D_{-,+r}$ (recall that $r$ is the parameter from above which is to be chosen later). We also write 
\begin{align*}
    GDG^T &= GD_+G^T - GD_{-,+r}G^T - GD_{-,-r}G^T  \\
    &= G_1D_+G_1^T - G_2 D_{-,+r}G_2^T - G_3D_{-,-r}G_3^T 
\end{align*}
where each component is PSD, and where $G_1,G_2,G_3$ consist of the columns of $G$ corresponding to the nonzero entries of $D_+$ and $D_{-,+r}$ and $D_{-,-r}$ respectively.  In particular note that this decomposition shows that these three random matrices are mutually independent.

Let $W_r \subseteq \R^k$ denote the image of $D_{-,+r}$ so that $W_r^{\perp} = \ker(D_{-,+r}).$ Let $\Pi_{W_r^{\perp}}\in\R^{k\times(k-r)}$ have columns forming an orthonormal basis for $W_r^{\perp}.$ By rotational  invariance of $G$, $G^T \Pi_{W_r^{\perp}}$ has i.i.d. $\mathcal{N}(0,1/k)$ entries. Thus it follows that
\[
\Pi_{W_r^{\perp}}^T G D_+ G^T \Pi_{W_r^{\perp}} 
\sim \frac{k-r}{k} \tilde{G} D_+ \tilde{G}^T
\sim \left(1-\frac{r}{k}\right)\tilde{G} D_+ \tilde{G}^T,
\]
where $\tilde{G} \in \R^{(k-r)\times d}$ has i.i.d $\mathcal{N}(0,\frac{1}{k-r})$ entries.

Now by Lemma~\ref{lem:psd_lower_bound}, along with our second assumption above, we have
\[
\lambda_{\ell}(\tilde{G} D_+ \tilde{G}^T) \geq \lambda_{\ell}(D_+) - \eps
= \lambda_{\ell}(D) - \eps,
\]
with probability at least $1 - \frac{1}{40}2^{-\ell}.$
Thus with the same probability, we then have
\[
\lambda_{\ell}(\Pi_{W_r^{\perp}}^T G D_+ G^T \Pi_{W_r^{\perp}})
\geq \parens{1-\frac{r}{k}}(\lambda_{\ell}(D) - \eps)
\geq \lambda_{\ell}(D) - 2\eps,
\]
where the last inequality follows from our third assumption above, along with the observation that $\lambda_{\ell}(D) \leq \frac{1}{\sqrt{\ell}}$ which comes from the assumption $\norm{D}{F}\leq 1$.

If the above holds, then by the Courant-Fischer theorem, there exists a subspace $S_{\ell}\subseteq W_r^{\perp}\subseteq \R^k$ such that
\begin{equation}
\label{eq:pos_part_upper_bound}
x^T G D_+ G^T x \geq \lambda_{\ell}(D) - 2\eps
\end{equation}
for all $x\in S_{\ell}.$  Note that the construction of $S_{\ell}$ was independent of $GD_{-,-r}G^T$ by the comment above.  Thus we may apply Lemma~\ref{lem:op_norm_over_subspace_upper_bound}, along with our first assumption, to conclude that with probability at least $1-\frac{1}{40}2^{-d}$,
\begin{equation}
\label{eq:neg_part_upper_bound}
\max_{v\in S_{\ell},\norm{v}{}=1} v^T G D_{-,-r}G^T v \leq 3\frac{\ell}{k}\norm{D_{-,-r}}{} \leq 3\frac{\ell}{k}\frac{1}{\sqrt{r}}.
\end{equation}
The last inequality holds because $\norm{D_-}{F}=1$, which implies that $\lambda_r(D_-) \leq \frac{1}{\sqrt{r}}.$

Now let $u\in S_{\ell}$ be an arbitrary unit vector. We write
\[
uGDG^Tu^T = u^T G D_+ G^T u - u^T G D_{-,-r} G^T u - u^T G D_{-,+r} G^T u.
\]
The last term vanishes by design since $x\in W_r^{\perp}.$  We then bound the first term using equation~\ref{eq:pos_part_upper_bound} and the second term using equation~\ref{eq:neg_part_upper_bound} to get
\[
uGDG^Tu^T \geq (\lambda_{\ell}(D) - 2\eps) - 3\frac{\ell}{k}\frac{1}{\sqrt{r}} \geq \lambda_{\ell}(D) - 5\eps,
\]
where the second inequality is form the fourth assumption above.

Our total failure probability in the argument above is at most $\frac{1}{40} 2^{-d} + \frac{1}{40}2^{-\ell} \leq \frac{1}{20}2^{-\ell}$ as desired. It remains to choose parameters so that our four assumptions are satisfied.  For this we take
\begin{align*}
k &\geq \max\left(c_1 d, \frac{c_2}{\eps^2} + \lfloor 2\ell\rfloor, \frac{2\sqrt{\ell}}{\eps}\right)\\
r &= \lfloor 2\ell \rfloor.
\end{align*}
Assumptions 1 and 2 clearly hold with this choice. For assumption 3, we have
\[\eps k \sqrt{\ell} \geq \eps \frac{2\sqrt{\ell}}{\eps}\sqrt{\ell} = 2\ell \geq r,\]
and for assumption 4,
\[
\eps k \sqrt{r} \geq \eps \frac{2\sqrt{\ell}}{\eps}\sqrt{2\ell-1} = 2\sqrt{\ell}\sqrt{2\ell-1} \geq \ell,
\]
since $\ell \geq 1.$  Finally, since $\ell\leq d,$ this gives a bound of $k = O(d + \frac{1}{\eps^2})$ as desired (note the inequality $\frac{\sqrt{d}}{\eps} \leq \max(d,1/\eps^2)$ for bounding the last term in the $\max$ defining $k$).

\subsection{Controlling the Tail}
In this section we use Hanson-Wright\footnote{See \cite{vershynin2018high} for a precise statement of Hanson-Wright.} to bound the effect of the tail eigenvalues of $A$ on the sketch.  Note that our application Hanson-Wright relies on Gaussianity of $G$ in order for the entries of $G^T u$ to be independent. 
\begin{lem}
\label{lem:tail_bound}
Let $Y\in \R^{d\times d}$ be symmetric (not necessarily PSD) with $\norm{Y}{}\leq \eps$ and $\norm{Y}{F}\leq 1$ . Let $G\in \R^{k\times n}$ have i.i.d. $\mathcal{N}(0,1/k)$ entries.  For $k\geq \Omega(1/\eps^2)$ we have
\[
\norm{GYG^T - 
\frac{1}{k}\Tr(Y)I}{} \leq O(\eps),
\]
with probability at least $29/30.$
\end{lem}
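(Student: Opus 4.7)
I would prove this by a standard $\eps$-net plus concentration argument, with Hanson-Wright providing the pointwise bound, as the lemma's preamble suggests. First note that for Gaussian $G$ one computes $\E[(GYG^T)_{ij}] = \delta_{ij}\Tr(Y)/k$, so the matrix $M := GYG^T - (1/k)\Tr(Y) I$ is exactly the deviation of $GYG^T$ from its mean. For any fixed unit vector $u \in \R^k$, Gaussianity of $G$ together with $\norm{u}{} = 1$ implies $G^T u \sim \mathcal{N}(0, I_d/k)$, which I write as $z/\sqrt{k}$ for $z \sim \mathcal{N}(0, I_d)$. Then
\[
u^T M u = \tfrac{1}{k}\bigl(z^T Y z - \Tr(Y)\bigr),
\]
and Hanson-Wright applied to $z^T Y z$ with deviation $t = C\eps k$ (and using $\norm{Y}{F} \leq 1$ and $\norm{Y}{} \leq \eps$) gives exponent $-c \min(C^2\eps^2 k^2,\, Ck)$. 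For $k \geq \Omega(1/\eps^2)$ the linear term dominates, so
\[
\pr\!\left[\abs{u^T M u} > C\eps\right] \leq 2 e^{-cCk}
\]
for an absolute constant $c > 0$.

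Next I would take a standard $1/4$-net $\mathcal{N}$ of the unit sphere in $\R^k$, of size at most $9^k$, and use the symmetric-matrix bound $\norm{M}{} \leq 2\max_{u \in \mathcal{N}} \abs{u^T M u}$. Union bounding over $\mathcal{N}$ gives total failure probability at most $2 \cdot 9^k \cdot e^{-cCk}$, and taking the constant $C$ large enough that $cC > \ln 9 + O(1)$ makes this at most $1/30$. This choice only enlarges the hidden constant in the assumption $k \geq \Omega(1/\eps^2)$, and yields $\norm{M}{} \leq 2C\eps = O(\eps)$ with probability at least $29/30$.

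The main subtlety is just the constant tracking: the Hanson-Wright exponent $cCk$ must beat the $k \log 9$ entropy of the net. This works precisely because the \emph{linear} term in Hanson-Wright dominates, which in turn uses the operator-norm hypothesis $\norm{Y}{} \leq \eps$; the Frobenius bound alone would yield only $O(\norm{Y}{F}/\sqrt{k}) = O(\eps)$ from the quadratic term, which is tight but needs the linear term to not blow up. Finally, as the preamble to the lemma remarks, Gaussianity of $G$ is essential here: it is needed so that the coordinates of $G^T u$ are independent, which is a prerequisite for Hanson-Wright applied to the quadratic form in $G^T u$.
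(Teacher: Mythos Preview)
Your proposal is correct and is essentially the same argument as the paper's: fix a unit vector, use Gaussianity so that $G^Tu$ has independent coordinates, apply Hanson--Wright to the centered quadratic form, then union bound over an $\eps$-net of the sphere in $\R^k$ (the paper uses a $1/3$-net with the factor-$3$ bound from \cite{vershynin2018high} instead of your $1/4$-net with factor $2$, but this is immaterial). Your closing remarks on why the hypothesis $\norm{Y}{}\le\eps$ is needed---so that the linear Hanson--Wright term $t/\norm{Y}{}$ is $\Omega(k)$ and beats the $k\log 9$ entropy of the net---are a useful gloss that the paper leaves implicit.
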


\begin{proof}
Let $u\in \R^k$ be an arbitrary fixed unit vector.  Note that $G^T u$ is distributed as $\mathcal{N}(0,\frac{1}{k}I_d)$ and so 
\[
\E(u^T G Y G^T u) = \frac{1}{k}\Tr(Y). 
\]

Set $\tilde{Y} = GYG^T - \frac{\Tr(Y)}{k}I.$ By Hanson-Wright,
\begin{align*}
\Pr\left(\left|u^T \tilde{Y} u\right| \geq 30\eps \right)
&=\Pr\left(\left|u^T G Y G^T u - \frac{1}{k}\Tr(Y)\right| \geq 30\eps \right)\\
&\leq 2\exp\left( -0.1 \min\left(\frac{(30\eps)^2 k^2}{\norm{Y}{F}^2} , \frac{(30\eps) k}{\norm{Y}{2}}  \right) \right)\\
&\leq 2\exp\left( - \min\left(90 \eps^2 k^2 , 3k  \right) \right).
\end{align*}
Note that in the final bound above we used the fact that $\norm{Y}{2}\leq \eps$.

Let $\mathcal{N}$ be a net for the sphere in $\R^k$ with mesh size $1/3,$ which may be taken to have size $9^k.$  By 4.4.3 in \cite{vershynin2018high},
\[\norm{G\tilde{Y}G^T}{2} \leq 3\sup_{x\in \mathcal{N}} |x^T G\tilde{Y}G^T x|.\]
By taking a union bound over the net and setting $k\geq \Omega(1/\eps^2)$, we then have
\[\Pr\left(\norm{\tilde{Y}}{2} \geq 93\eps\right) \leq 2\exp\left( - \min\left(90 \eps^2 k^2 , 3k  \right) \right) 9^k \leq \frac{1}{30},\]
for $\eps<1$.
\end{proof}




\subsection{Proof of Theorem~\ref{thm:spectral_apx}}

\begin{proof}
By rescaling, it suffices to consider that case $\norm{A}{F}=1.$ We start by decomposing $A$ into two pieces $A = A_1 + A_2$, where $A_1$ is $A$ with all eigenvalues smaller than $\eps$ in magnitude zeroed out.

To handle the large eigenvalues, we apply Lemma~\ref{lem:sketched_eval_upper_bound} and Lemma~\ref{lem:sketched_eval_lower_bound}.  Suppose that $A_1$ has $n$ nonzero eigenvalues.  Then we note that the nonzero eigenvalues of $GA_1G^T$ have the same distribution as the eigenvalues of $\tilde{G} \tilde{A}_1 \tilde{G}^T$ where $\tilde{A}_1$ is a symmetric $n\times n$ matrix with eigenvalues the same as the nonzero eigenvalues of $A_1$ and where $\tilde{G} \in \R^{k\times n}$ has i.i.d. $\mathcal{N}(0,1/k)$ entries.  This effectively means that we may treat $A_1$ has having dimension $n$ when applying Lemma~\ref{lem:sketched_eval_upper_bound} and Lemma~\ref{lem:sketched_eval_lower_bound}.

By taking a union bound over the positive eigenvalues of $A_1$ and applying Lemma~\ref{lem:sketched_eval_upper_bound} we get the upper bound $\lambda_{\ell}(GA_1 G^T) \leq \lambda_{\ell}(A_1) + O(\eps)$ uniformly for all $\ell$ such that $\lambda_{\ell}(A_1)> 0$, with failure probability at most
\[
\sum_{i=1}^n \frac{1}{20} 2^{-\min(\ell, \eps^{-2})} \leq \frac{1}{20}\sum_{i=1}^n 2^{-\ell} \leq \frac{1}{20},
\]
where the the first inequality follows from the fact that $\ell \leq n \leq 1/\eps^2$, which in turn holds since $\norm{A_1}{F}\leq 1.$

Similarly Lemma~\ref{lem:sketched_eval_lower_bound} gives the lower bound $\lambda_{\ell}(G A_1 G^T) \leq \lambda_{\ell}(A_1) - \eps$ uniformly for all $\ell$ such that $\lambda_{\ell}(A_1)> 0$, with failure probability at most
\[
\sum_{i=1}^{\ell}\frac{1}{20}2^{-\ell} \leq \frac{1}{20}.
\]
Thus with at least $9/10$ probability, $\abs{\lambda_{\ell}(GA_1G^T) - \lambda_{\ell}(A_1)} \leq O(\eps)$ for all $\ell$ such that $\lambda_{\ell}(A_1) > 0.$  By applying the above argument to $-A_1$ we get the same guarantee for the negative eigenvalues, i.e. $\abs{\lambda_{k-\ell}(GA_1G^T) - \lambda_{k-\ell}(A_1)} \leq O(\eps)$ for all $\ell$ such that $\lambda_{k-\ell}(A_1) < 0.$ By a union bound, the positive and negative guarantees hold together with failure probability at most $1/5$.

Next we apply the tail bound of Lemma~\ref{lem:tail_bound} to control the perturbations resulting from the tail. By the triangle inequality,
\begin{align*}
\norm{GA_2G^T - 
\frac{1}{k}\Tr(GAG^T)I}{} 
&\leq \norm{GA_2G^T - \frac1k\Tr(A_2)I}{} \\
&\hspace{0.5cm}+ \norm{\frac1k\Tr(A_2)I - \frac{1}{k}\Tr(GAG^T)I}{}\\
&\leq \norm{GA_2G^T - \frac1k\Tr(A_2)I}{}\\
&\hspace{0.5cm}+ \frac1k\abs{\Tr(A_2) - \Tr(GA_2G^T)}\\
&\hspace{0.5cm}+ \frac1k\abs{\Tr(GA_1G^T)}
\end{align*}
The first of these terms is bounded by $O(\eps)$ with failure probability at most $1/30$ by Lemma~\ref{lem:tail_bound}. The second term is easily bounded by $O(\eps)$ with failure probability at most $1/30$ since $\Tr(GA_2G^T)$ is a trace estimator for $A_2$ with variance at $O(\norm{A_2}{F})=O(1)$ (in fact the variance is even smaller).  For the third term, note that $A_1$ has at most $1/\eps^2$ nonzero eigenvalues, so $\Tr(A_1) \leq \frac{1}{\eps}\norm{A}{F} \leq \frac{1}{\eps}.$ Thus since $\Tr(GA_1G^T)$ is a trace estimator for $A_1$, the third term is bounded by $O(\eps)$ with failure probability at most $1/30$.  Thus we have the bound
\[
\norm{GA_2G^T - 
\frac{1}{k}\Tr(GAG^T)I}{}  \leq O(\eps),
\]
with failure probability at most $1/10.$  This gives the bound
\begin{align*}
    \lambda_{\ell}(GAG^T) &= \lambda_{\ell}(GA_1G^T + GA_2G^T)\\
    &= \lambda_{\ell}\left(GA_1G^T + \frac1k \Tr(GAG^T)I + GA_2G^T -\frac1k \Tr(GAG^T)I \right)\\
    &= \lambda_{\ell}\left(GA_1G^T +\frac1k \Tr(GAG^T)I\right) \\
    &\hspace{0.5cm} \pm \norm{GA_2G^T -\frac1k\Tr(GAG^T)I}{2}\\
    &= \lambda_{\ell}(GA_1G^T) + \frac1k\Tr(GAG^T) \pm O(\eps).
\end{align*}
Setting $\widehat{\lambda_{\ell}} = \lambda_{\ell}(GAG^T) - \frac1k \Tr(GAG^T),$ we therefore have $\widehat{\lambda_{\ell}} = \lambda_{\ell}(GA_1G^T) \pm O(\eps).$  Combining with the bounds above gives $\widehat{\lambda_{\ell}} = \lambda_{\ell}(A_1) \pm O(\eps)$ if $\lambda_{\ell}(A_1) > 0$ and $\widehat{\lambda_{k-\ell}} = \lambda_{k-\ell}(A_1) \pm O(\eps)$ if $\lambda_{k-\ell}(A_1) > 0.$

Thus there is a subset of $n$ of the $\widehat{\lambda_{\ell}}$'s which provide an $O(\eps)$ additive approximation to the set of eigenvalues of $A$ which are at least $\eps.$  The above bound shows that the remaining $\widehat{\lambda_{\ell}}$'s are bounded by $O(\eps)$ and the result follows.
\end{proof}

\section{Lower bounds for eigenvalue estimation}

We will use the Wishart distribution throughout this section which is defined as follows.

\begin{definition}
The $n$ dimensional Wishart distribution with $r$ degrees of freedom $W(n,r)$ is the distribution of $GG^T$ where $G\in \R^{n\times r}$ has i.i.d. standard normal entries.
\end{definition}

In this section we show that $\Omega(r)$ matrix-vector queries are necessary to determine the rank of a matrix with all nonzero entries $\Omega(1).$  Specifically we show that distinguishing between $W(n,r)$ and $W(n,r+2)$ requires $\Omega(r)$ queries for $r \leq O(n).$ In Appendix~\ref{appendix:lower_bound_projection} we sketch a proof of a similar lower bound for determining the rank of the orthogonal projection onto a random subspace.

For now we consider the following problem.

\begin{problem}
\label{prob:distinguishing_wisharts}
Given a matrix $A$ sampled from either $\mathcal{D}_1 = W(n,r)$ or $\mathcal{D}_2 = W(n,r+2)$ each with equal probability, decide between $\mathcal{D}_1$ and $\mathcal{D}_2$ with at least $2/3$ probability, using (possibly adaptive) matrix-vector queries to $A$.
\end{problem}

We first make note of the following result, which is effectively a version of Lemma 13 from \cite{braverman2020gradient}, adapted to Wishart matrices $W(n,r)$ with $n$ and $r$ not necessarily equal.  This will allow us to show that adaptivity is unhelpful, and hence reduce to studying the non-adaptive case.

\begin{prop}
\label{prop:conditional_distribution}
Let $A\sim W(n,r),$ and let $k<r\leq n.$  Then the conditional distribution $A|\{Ae_1 = x_1, \ldots, Ae_k = x_k\}$ can be written as
\[
M_k + \diag(0_{k\times k}, W(n-k, r-k)),
\]
where $M_k \in \R^{n\times n}$ has rank at most $k$ and depends only on $x_1,\ldots, x_k$. In particular $M_k$ does not depend on $r.$
\end{prop}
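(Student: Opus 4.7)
The plan is to realize $A = GG^T$ with $G\in \R^{n\times r}$ having i.i.d.\ standard Gaussian entries, and block partition $G = \begin{pmatrix} G_1 \\ G_2 \end{pmatrix}$ with $G_1\in \R^{k\times r}$ and $G_2 \in \R^{(n-k)\times r}$. The queries $Ae_1,\ldots, Ae_k$ are exactly the first $k$ columns of $A$, so conditioning on them is the same as conditioning on the blocks $G_1 G_1^T$ and $G_2 G_1^T$. The only block of $A$ not directly revealed is the bottom-right $G_2 G_2^T$, and the task reduces to rewriting its conditional law as a deterministic function of the observations plus an independent $W(n-k, r-k)$.

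To isolate the new randomness, I would decompose each row of $G_2$ along the row span of $G_1$ (almost surely $k$-dimensional in $\R^r$ since $k<r$) and its orthogonal complement. With $P = G_1^T(G_1 G_1^T)^{-1}G_1$, the identity $(I-P)G_1^T = 0$ yields $G_2 G_2^T = G_2 P G_2^T + G_2(I-P)G_2^T$, and a one-line computation shows $G_2 P G_2^T = (G_2 G_1^T)(G_1 G_1^T)^{-1}(G_2 G_1^T)^T$. Packaging the observed blocks gives
\[
M_k = \begin{pmatrix} G_1 G_1^T & (G_2 G_1^T)^T \\ G_2 G_1^T & (G_2 G_1^T)(G_1 G_1^T)^{-1}(G_2 G_1^T)^T \end{pmatrix} = C(G_1 G_1^T)^{-1} C^T, \qquad C = \begin{pmatrix} G_1 G_1^T \\ G_2 G_1^T \end{pmatrix} \in \R^{n\times k},
\]
so the factorization through an $n\times k$ matrix gives $\rank(M_k)\leq k$, and crucially $M_k$ is a fixed function of the query answers $x_1,\ldots,x_k$ with no dependence on $r$.

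The main obstacle is identifying the law of the leftover piece $G_2(I-P)G_2^T$ and showing it is a clean Wishart independent of the observations. I would first condition on the entirety of $G_1$, which is strictly more than what is revealed. Pick an orthogonal $Q\in O(r)$, measurable in $G_1$, whose last $r-k$ columns form an orthonormal basis of $\ker(G_1)$, so that $I-P = Q\diag(0_k, I_{r-k})Q^T$ and $G_1 Q = (G_1 Q_1, 0)$. Writing $\tilde H = G_2 Q = (\tilde H_1, \tilde H_2)$, rotational invariance of the i.i.d.\ rows of $G_2$ makes $\tilde H$ an i.i.d.\ $\mathcal{N}(0,1)$ matrix conditional on $G_1$; moreover $G_2 G_1^T = \tilde H_1 (Q_1^T G_1^T)$ depends on $G_2$ only through $\tilde H_1$, while $G_2(I-P)G_2^T = \tilde H_2 \tilde H_2^T \sim W(n-k, r-k)$ depends only on $\tilde H_2$. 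Independence of the columns of $\tilde H$ therefore makes the Wishart piece independent of $(G_1, G_2 G_1^T)$, so the conditional distribution of $A$ given $(G_1, G_2 G_1^T)$ is exactly $M_k + \diag(0, W(n-k, r-k))$. Since this expression depends on $G_1$ only through $G_1 G_1^T$, the tower property collapses it to the same statement conditional on the coarser $\sigma$-algebra generated by the observations, completing the proof.
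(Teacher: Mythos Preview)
Your proof is correct. It differs from the paper's argument in that the paper proceeds by induction on $k$, peeling off one query at a time: they first show
\[
A\mid\{Ae_1 = x_1\} \sim \tfrac{1}{x_{11}}x_1 x_1^T + \diag(0, W(n-1,r-1)),
\]
and then iterate, building $M_k$ as a sum of rank-one updates $M_{j+1} = M_j + \tfrac{1}{v_{j+1,j+1}} v_{j+1}v_{j+1}^T$. You instead handle all $k$ queries at once via the block decomposition $G_2G_2^T = G_2PG_2^T + G_2(I-P)G_2^T$, which yields the closed Schur-complement form $M_k = C(G_1G_1^T)^{-1}C^T$ directly; this makes both the rank bound and the $r$-independence immediate, and your tower-property step (conditioning first on all of $G_1$, then coarsening) is a clean way to certify the Wishart law of the residual. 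The two $M_k$'s coincide, as one checks for $k=1$. Your route is arguably more streamlined for this proposition in isolation; the paper's inductive framing has the advantage of mirroring the one-query-at-a-time structure used immediately afterward to argue that adaptivity is unhelpful.
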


\begin{proof}
Write $A = GG^T$ where $G\in \R^{n\times r}$ has i.i.d. $\mathcal{N}(0,1)$ entries. Write $g_1,g_2,\ldots$ for the rows of $G$.  We first consider the conditional distribution $A|\{Ae_1 = x_1\}.$  In other words, we are conditioning on the events $\inner{g_1}{g_i} = x_{1i}$ for all $i$.  By rotational invariance, we may additionally condition on $g_1 = \sqrt{x_{11}}e_1$ without changing the resulting distribution.  Then for $i > 1$, the conditional distribution of $g_i$ can be written as $\frac{x_{1i}}{\sqrt{x_{11}}} e_1 + h_i$ where $h_i$ is distributed as $\mathcal{N}(0, I_{n-1})$ in the orthogonal complement of $e_1.$  It follows from this that we can write
\begin{equation}
\label{eqn:first_step_wishart_induction}
A|\{Ae_1 = x_1\} \sim \frac{1}{x_{11}} x_1x_1^T + \diag(0,W(n-1,r-1)).
\end{equation}
So we have $M_1 = \frac{1}{x_{11}} x_1x_1^T$. Now we apply the above line inductively. 

For $j < r$, let $W_j\sim \diag(0_{k\times k}, W(n-j, r-j)),$ and write
\begin{align*}
A | \{Ae_1 = x_1, \ldots Ae_{j+1} = x_j\}  &\sim \left(A | \{Ae_1 = x_1, \ldots Ae_j = x_j\} \right) | \{A e_{j+1} = x_{j+1}\}\\
&\sim (M_j + W_j)|\{(M_j + W_j) e_{j+1} = x_{j+1}\}\\
&\sim (M_j + W_j) | \{W_j e_{j+1} = x_{j+1} - M_j e_{j+1}\}\\
&\sim (M_j + W_j) | \{W_j e_{j+1} = v_{j+1}\}\\
&\sim M_j + \left(W_j | \{W_j e_{j+1} = v_{j+1}\}\right)
\end{align*}
where we set $v_{j+1} = x_{j+1} - M_j e_{j+1}.$

By applying \ref{eqn:first_step_wishart_induction},
\[
\{W_j e_{j+1} = v_{j+1}\} = \frac{1}{v_{j+1,j+1}} v_{j+1}v_{j+1}^T + W_{j+1}.
\]

Hence we can take
\[
M_{j+1} = M_j + \frac{1}{v_{j+1,j+1}} v_{j+1}v_{j+1}^T,
\]
and the induction is complete.

\end{proof}


\begin{prop}
\label{prop:wishart_adaptivity_unhelpful}
Of all (possibly adaptive) algorithms for Problem~\ref{prob:distinguishing_wisharts} which make $k\leq r$ queries, there is an optimal such algorithm (in the sense of minimizing the failure probability), which queries on the standard basis vectors $e_1, \ldots, e_k$.
\end{prop}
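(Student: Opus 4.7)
The plan is to induct on the query index $i$, showing that any adaptive algorithm $\mathcal{A}$ making $k \le r$ queries can be simulated by an algorithm $\mathcal{A}'$ that non-adaptively queries $e_1, \ldots, e_k$, with identical failure probability. The inductive invariant is that after $\mathcal{A}'$ has queried $e_1, \ldots, e_{i-1}$ and observed $Ae_1 = x_1, \ldots, Ae_{i-1} = x_{i-1}$, it has also faithfully simulated $\mathcal{A}$'s first $i-1$ queries on a virtual input sharing the true posterior distribution of $A$. Proposition~\ref{prop:conditional_distribution} provides the decisive structural fact: this posterior is $M_{i-1} + \diag(0_{(i-1)\times(i-1)}, W')$, where $W' \sim W(n-i+1, r^{*}-i+1)$ is an independent Wishart block and $M_{i-1}$ is a deterministic function of the past responses that is independent of the unknown parameter $r^* \in \{r, r+2\}$. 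All information distinguishing $\mathcal{D}_1$ from $\mathcal{D}_2$ after the first $i-1$ queries therefore lives in the residual Wishart block $W'$, which is rotationally invariant.

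For the inductive step, write $\mathcal{A}$'s next query as $v_i = u + P^T w$, with $u \in \spn(e_1,\ldots,e_{i-1})$, with $P$ projecting onto the last $n-i+1$ coordinates, and with $w \in \R^{n-i+1}$. Since $Au = \sum_{j<i} u_j x_j$ is a known linear combination of past responses, the only new content in $A v_i$ is $W' w$ (plus known offsets coming from $M_{i-1}$). By the rotational invariance of $W'$, for any orthogonal $U$ with $Uw = \|w\|\, e_1^{(n-i+1)}$ the matrix $\tilde W := U^T W' U$ is equal in law to $W'$, and $\tilde W w = \|w\|\, U^T W' e_1^{(n-i+1)}$. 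Since $W' e_1^{(n-i+1)}$ is computable from $\mathcal{A}'$'s own observation $Ae_i$ by subtracting off the $M_{i-1}$ contribution, $\mathcal{A}'$ can construct a response $\tilde W w$ with the correct conditional distribution and feed it to $\mathcal{A}$, completing the simulation of step $i$.

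The main obstacle is verifying that the per-step rotations compose consistently across all $k$ adaptive rounds: each virtual rotation $U_i$ affects the joint law of the residual Wishart blocks used in later rounds, and one must check that iterating Proposition~\ref{prop:conditional_distribution} along the simulation still produces the correct joint distribution. This reduces to the observation that writing $W' = GG^T$ for a Gaussian $G$ and conjugating by $U_i$ replaces $G$ by $U_i^T G$, which is again i.i.d. Gaussian; coupling the two processes at every level yields joint equality in distribution. Since $\mathcal{A}'$ then sees simulated transcripts with the same law as $\mathcal{A}$'s true transcripts, the two algorithms produce the same decision distribution and have identical failure probabilities, proving the proposition.
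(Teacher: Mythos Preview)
Your proposal is correct and follows essentially the same route as the paper: induct on the query index, invoke Proposition~\ref{prop:conditional_distribution} to expose the posterior as $M_{i-1}$ (known, independent of the unknown rank) plus a fresh rotationally invariant Wishart block, and then use that rotational invariance to replace $\mathcal{A}$'s next query direction by $e_i$. The only cosmetic difference is framing: the paper argues ``among optimal algorithms, one may take the $(j{+}1)$st query to be $e_{j+1}$'' by directly appealing to rotational invariance of $W(n-j,s-j)$, whereas you build an explicit simulator $\mathcal{A}'$ and track the coupling of the per-step rotations; both arguments reduce to the same fact that conjugating the residual Gaussian factor by an orthogonal matrix preserves its law.
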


\begin{proof}
Let $s$ be either $r$ or $r+2$ corresponding to which of $\mathcal{D}_1$ and $\mathcal{D}_2$ is sampled from. By rescaling, we assume that only unit vectors are queried.

We argue by induction. Since $\mathcal{D}_1$ and $\mathcal{D}_2$ are rotationally invariant, we may without loss of generality take the first query to be $e_1.$

Now suppose inductively that there is an optimal $k$ query algorithm $\mathcal{A}$ whose first $j$ queries are always $e_1, \ldots, e_j.$  Suppose on a fixed run, that $Ae_1 = x_1, \ldots, Ae_j = x_j.$  By Proposition~\ref{prop:conditional_distribution}, we may write the resulting conditional distribution as
\[
A|\{Ae_1 = x_1, \ldots Ae_j = x_j\} = M_j + A_j,
\]
where $M_j$ depends deterministically on $x_1, \ldots, x_j$ (and not on $s$), and $A_j \sim \diag(0_{j\times j}, W(n-j, s-j))$.

Now since $M_j$ is know to $\mathcal{A}$, we may assume that on iteration $j+1$, $\mathcal{A}$ is given matrix-vector query access to $A_j,$ rather than to $A$.  Since the first $j$ rows and columns of $A_j$ are filled with zeros, we may assume that $\mathcal{A}$ queries on a vector in $\spn\{e_{j+1}, \ldots, e_n\}.$ Then by rotational invariance of $W(n-j, s-j)$, we may take $\mathcal{A}$ to query on $e_j$ on iteration $j+1.$ This completes the induction, and the claim follows.
\end{proof}

In light of the previous result, only non-adaptive queries are necessary.  In fact we can make an even stronger claim.  Let $E_k$ denote the matrix with columns $e_1,\ldots, e_k.$  The previous proposition showed that an optimal tester only needs to observe $A E_k$, the first $k$ columns of $A.$  In fact, only $E_k^T A E_k$, the leading principal submatrix of $A$ is relevant. We first state a simple fact that drives the argument.

\begin{prop}
\label{prop:simple_rotation_invariance_consequence}
Let $X\in k\times r_1$ and $Y\in k\times r_2$ be fixed matrices such that $XX^T = YY^T.$ Let $v_1\in \R^{r_1}$ and $v_2 \in \R^{r_2}$ have i.i.d. standard normal entries.  Then $Xv_1$ and $Yv_2$ have the same distribution.
\end{prop}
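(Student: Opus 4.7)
The plan is to observe that both $Xv_1$ and $Yv_2$ are centered Gaussian vectors in $\R^k$, so their distributions are completely determined by their covariance matrices. Since $v_1$ has i.i.d.\ standard normal entries, $Xv_1$ is a linear image of a standard Gaussian and is therefore a centered Gaussian with covariance $\E[Xv_1(Xv_1)^T] = X \E[v_1 v_1^T] X^T = X I_{r_1} X^T = XX^T$. The analogous computation for $Yv_2$ gives covariance $YY^T$. By hypothesis these two covariance matrices agree, so the two Gaussian vectors have identical distributions.

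To make this rigorous without appealing to the general fact that a Gaussian is determined by its covariance, I would check equality of characteristic functions directly. For any $u\in \R^k$, the random variable $\inner{u}{Xv_1} = (X^T u)^T v_1$ is a sum of independent centered Gaussians, hence Gaussian with variance $\norm{X^T u}{2}^2 = u^T XX^T u$; the same holds for $\inner{u}{Yv_2}$ with variance $u^T YY^T u$. Therefore
\[
\E\!\left[e^{i \inner{u}{Xv_1}}\right] = \exp\!\parens{-\tfrac{1}{2} u^T XX^T u} = \exp\!\parens{-\tfrac{1}{2} u^T YY^T u} = \E\!\left[e^{i \inner{u}{Yv_2}}\right]
\]
for every $u$, so the two distributions agree by uniqueness of the characteristic function.

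There is essentially no obstacle here; the statement is a direct consequence of rotational invariance of the standard Gaussian (equivalently, of the fact that i.i.d.\ Gaussian entries give a distribution whose covariance transforms as $A \mapsto AA^T$ under linear maps). The only mildly delicate point is that $r_1$ and $r_2$ may differ, but this is irrelevant once one passes to covariances on the common space $\R^k$.
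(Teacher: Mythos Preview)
Your argument is correct. Both $Xv_1$ and $Yv_2$ are linear images of standard Gaussians, hence centered Gaussian vectors in $\R^k$, and computing their covariances as $XX^T$ and $YY^T$ immediately gives the result once one knows a Gaussian is determined by its mean and covariance; your characteristic function check makes this self-contained.

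The paper takes a slightly different, more constructive route: assuming $r_2\geq r_1$, it uses $XX^T=YY^T$ to find an orthogonal $U\in\R^{r_2\times r_2}$ with $YU=[X,\,0]$, and then invokes rotational invariance of the standard Gaussian to conclude $Yv_2\sim YUv_2 = [X,\,0]v_2 \sim Xv_1$. Your approach avoids the (mild) linear-algebra step of producing such a $U$ and is arguably the more direct proof; the paper's version, on the other hand, makes the underlying rotation explicit and ties in with the ``rotational invariance'' theme used elsewhere in the section. Both arguments are short and rest on the same fact, just packaged differently.
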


\begin{proof}

Suppose without loss of generality that $r_2\geq r_1.$ Then since $XX^T = YY^T$, there is an orthogonal matrix $U \in \R^{r_2\times r_2}$ such that
\[
YU = [X, 0_{k\times (r_1-r_2)}].
\]
Now let $g\in \R^{r_2}$ have i.i.d. standard normal entries.  By rotational invariance $Ug \in \R^{r_2}$ does as well.  So $YU$ has the same distribution as $Yv_2.$  Also $[X, 0_{k\times (r_1-r_2)}] g$ is distributed as $Xv_1$, so $Xv_1$ and $Yv_2$ have the same distribution as desired.
\end{proof}

\begin{prop}
\label{prop:top_corner_suffices}
Suppose that $A_1 \sim W(n,r)$ and $A_2 \sim W(n,r+2).$  Then for $k\leq r$, \[\tv(A_1 E_k, A_2 E_k) = \tv(E_k^T A_1 E_k, E_k^T A_2 E_k).\]
\end{prop}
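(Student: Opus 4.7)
The plan is to prove equality by verifying both inequalities. The direction $\tv(E_k^T A_1 E_k, E_k^T A_2 E_k) \leq \tv(A_1 E_k, A_2 E_k)$ is immediate from the data processing inequality for total variation: $E_k^T A_i E_k$ is a deterministic function of $A_i E_k$ (extract the first $k$ rows), so applying any function cannot increase the $\tv$ distance.

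For the reverse direction, my plan is to exhibit a single Markov kernel $K$ (not depending on $i \in \{1,2\}$) such that the law of $K(E_k^T A_i E_k)$ equals the law of $A_i E_k$. Once such a kernel exists, data processing again gives $\tv(A_1 E_k, A_2 E_k) \leq \tv(E_k^T A_1 E_k, E_k^T A_2 E_k)$. To construct $K$, I write $A_i = G_i G_i^T$ with $G_i \in \R^{n \times s_i}$ having i.i.d.\ $\mathcal{N}(0,1)$ entries (where $s_1 = r$, $s_2 = r+2$), and partition $G_i$ into its first $k$ rows $H_i$ and its remaining $n-k$ rows $K_i$. Then
\[
A_i E_k \;=\; \begin{pmatrix} H_i H_i^T \\ K_i H_i^T \end{pmatrix}, \qquad E_k^T A_i E_k \;=\; H_i H_i^T.
\]
The kernel $K$ will leave the top $k \times k$ block $M := E_k^T A_i E_k$ untouched and append $n-k$ independent rows drawn as $\mathcal{N}(0, M)$, producing the bottom block.

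The one step that requires care is showing that the conditional law of $K_i H_i^T$ given $H_i H_i^T$ genuinely depends only on $M = H_i H_i^T$ and not on $i$ (or on $H_i$ beyond its Gram matrix). Since $K_i$ and $H_i$ are independent and the rows of $K_i$ are i.i.d.\ standard Gaussians in $\R^{s_i}$, conditioning on $H_i$ makes each row of $K_i H_i^T$ an independent draw from $\mathcal{N}(0, H_i H_i^T)$. Proposition~\ref{prop:simple_rotation_invariance_consequence} applied row by row shows that any two matrices $H_i, H_i'$ (possibly with different numbers of columns) with $H_i H_i^T = H_i' (H_i')^T$ yield the same conditional distribution for $K_i H_i^T$. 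Thus the conditional law depends only on $M$, so appending rows drawn from $\mathcal{N}(0,M)$ faithfully reproduces the joint law of $(E_k^T A_i E_k,\ K_i H_i^T) = A_i E_k$ in both cases $i = 1, 2$. Combining the two inequalities completes the proof.
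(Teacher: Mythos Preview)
Your proof is correct and follows essentially the same route as the paper. Both arguments partition $G_i$ into its top $k$ rows and bottom $n-k$ rows, invoke Proposition~\ref{prop:simple_rotation_invariance_consequence} to see that the conditional law of the bottom block given the top $k\times k$ Gram matrix depends only on that Gram matrix (and not on $i$), and deduce the equality of total variation distances; you simply phrase the conclusion via a Markov kernel and two applications of data processing, while the paper states directly that equal conditionals force $\tv$ of the joints to equal $\tv$ of the marginals. (One cosmetic remark: you use $K$ for the kernel and $K_i$ for the bottom rows of $G_i$, which is a mild notational clash worth renaming.)
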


\begin{proof}
Let $G_1 \in \R^{k\times r}$ and $H_1 \in \R^{(n-k) \times r}$ have i.i.d. standard normal entries.  Similarly let $G_2 \in \R^{k\times (r+2)}$ and $H_2 \in \R^{(n-k) \times (r+2)}$ have i.i.d. standard normal entries. 

By the definition of the Wishart distribution, the joint distribution of the entries of $A_1 E_k$ is precisely that of $(G_1 G_1^T, H_1 G_1^T)$ and similarly for $A_2 E_k.$ Hence,
\[
\tv(A_1 E_k, A_2 E_k) = \tv\left((G_1 G_1^T, H_1 G_1^T), (G_2 G_2^T, H_2 G_2^T)\right).
\]

For a fixed matrix $M$ of the appropriate dimensions, we consider the conditional distribution $H_i G_i^T | \{G_i G_i^T = M\}$ for $i=1,2.$  The rows of this random matrix are independent (since the rows of $H_i$ are independent), and by Proposition~\ref{prop:simple_rotation_invariance_consequence} the distribution of each row is a function of $M$.  Hence it follows that
\[
H_1 G_1^T | \{G_1 G_1^T = M\} = H_2 G_2^T | \{G_2 G_2^T = M\}
\]
for all $M$.  Therefore,
\[
\tv\left((G_1 G_1^T, H_1 G_1^T), (G_2 G_2^T, H_2 G_2^T)\right) = \tv(G_1G_1^T, G_2 G_2^T).
\]
Since $E_k^T A_i E_k$ has the same distribution as $G_i G_i^T$, the claim follows.
\end{proof}

Our problem is now reduced to that of determining the degrees of freedom of a Wishart from observing the top corner (which is itself Wishart).  We will give a lower bound for this problem.

Our proof uses the following version of Theorem 5.1 in \cite{jonsson1982some}.
\begin{thm}
\label{thm:wishart_det_convergence}
Let $\alpha\in (0,1)$ be a constant, and let $n,r \rightarrow \infty$ simultaneously, with $n/r \rightarrow \alpha.$  Then
\[
\frac{\det( W(n,r))}{(r-1)(r-2)\ldots(r-n)} \rightarrow e^{\mathcal{N}(0, -2\log(1-\alpha))},
\]
where the convergence is in distribution.
\end{thm}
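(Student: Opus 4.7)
The plan is to prove the theorem via the Bartlett decomposition of the Wishart distribution. If $W \sim W(n,r)$ with $n \leq r$ is written as $W = LL^T$ using a Cholesky factorization, then the diagonal entries of $L$ are mutually independent, with $L_{ii}^2 \sim \chi^2_{r-i+1}$ (and independent of the off-diagonal entries of $L$). Hence
\[
\det W(n,r) = \prod_{i=1}^n L_{ii}^2 = \prod_{i=1}^n X_i, \qquad X_i \sim \chi^2_{r-i+1} \text{ independent,}
\]
so taking logarithms converts the determinant into a sum of independent random variables, which is exactly the setting for a central limit argument.

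Next I would carry out the mean and variance computations. Using $E[\log \chi^2_k] = \log 2 + \psi(k/2)$ and $\var[\log \chi^2_k] = \psi'(k/2)$, together with the asymptotic expansions $\psi(k/2) = \log(k/2) - 1/k + O(k^{-2})$ and $\psi'(k/2) = 2/k + O(k^{-2})$, I would show that
\[
\sum_{i=1}^n E[\log X_i] - \log\bigl[(r-1)(r-2)\cdots(r-n)\bigr] \to 0
\]
as $n/r \to \alpha$. The cancellation making this work is that the discrepancy between $\sum_i \log(r-i+1)$ and $\log[(r-1)\cdots(r-n)]$ telescopes to $\log(r/(r-n)) \to -\log(1-\alpha)$, which exactly offsets the contribution $-\sum_i 1/(r-i+1) \to \log(1-\alpha)$ coming from the subleading term of the digamma expansion; this explains the particular normalization used in the statement. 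An analogous Riemann-sum argument gives
\[
\sum_{i=1}^n \var[\log X_i] = \sum_{i=1}^n \psi'\!\left(\tfrac{r-i+1}{2}\right) \to -2\log(1-\alpha),
\]
identifying the limiting variance.

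The remaining and most delicate step is to apply the Lindeberg--Feller central limit theorem to the triangular array $\{\log X_i - E[\log X_i] : 1\leq i\leq n\}$ in order to conclude
\[
\log\det W(n,r) - \log\bigl[(r-1)\cdots(r-n)\bigr] \to \mathcal{N}\bigl(0,\, -2\log(1-\alpha)\bigr),
\]
after which exponentiation yields the log-normal limit stated in the theorem. The main obstacle is that the summands have widely varying degrees of freedom $r-i+1$, with the smallest being $r-n+1 \sim r(1-\alpha)$, so higher moments of $\log \chi^2_k$ must be controlled uniformly across the entire range. I expect to use sub-exponential tail estimates for the centered variable $\log \chi^2_k - \psi(k/2) - \log 2$, which give a third absolute moment of order $k^{-3/2}$, in order to verify that the Lyapunov ratio tends to zero (crucially using $r-n+1 \to \infty$ since $\alpha < 1$).
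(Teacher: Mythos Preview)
The paper does not actually prove this theorem: it is quoted verbatim as a version of Theorem~5.1 in Jonsson (1982) and used as a black box in the proof of Lemma~\ref{lem:wisharts_close_in_tvd}. So there is no proof in the paper to compare against.

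That said, your outline is correct and is essentially the classical argument behind Jonsson's result. The Bartlett decomposition gives $\det W(n,r) \stackrel{d}{=} \prod_{i=1}^n \chi^2_{r-i+1}$ with independent factors; the moment identities $E[\log\chi^2_k]=\log 2+\psi(k/2)$ and $\var[\log\chi^2_k]=\psi'(k/2)$ together with the digamma/trigamma expansions give exactly the centering and the limiting variance $-2\log(1-\alpha)$ that you describe, and the cancellation you identify between the shift $\log\bigl(r/(r-n)\bigr)$ and the $-\sum_i 1/(r-i+1)$ term is the right reason the mean vanishes. For the CLT step, Lyapunov with third moments works: the centered $\log\chi^2_k$ has third absolute moment $O(k^{-3/2})$, so the Lyapunov numerator is $\sum_{j=r-n+1}^{r} j^{-3/2} = O(r^{-1/2})$ while the variance stays bounded away from zero; the hypothesis $\alpha<1$ is precisely what keeps $r-n+1\to\infty$ and makes this go through. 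One small point worth making explicit in a full write-up is that the $O(k^{-2})$ remainders in the digamma and trigamma expansions, when summed over $i$, contribute $O\bigl(\sum_{j=r-n+1}^{r} j^{-2}\bigr)=O(1/(r-n))\to 0$, so they do not disturb either the mean or variance computation.
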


\begin{lem}
\label{lem:wisharts_close_in_tvd}
Let $\alpha = 0.1$.  There exists a constant $c$ so that if $r \geq c$, then
\[
\tv\left(W(\lfloor \alpha r \rfloor, r), W(\lfloor \alpha r \rfloor, r+2)\right) \leq 0.2.
\]
\end{lem}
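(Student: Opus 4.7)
The plan is to write down the exact likelihood ratio between $W(n,r+2)$ and $W(n,r)$ from the Wishart densities, bound its second moment via a Bartlett-decomposition calculation to get uniform integrability, and then invoke Theorem~\ref{thm:wishart_det_convergence} to evaluate the limiting total variation distance as $r \to \infty$ with $n = \lfloor \alpha r \rfloor$.

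First I would compute the likelihood ratio. The Wishart density with identity scale is $f_r(W) \propto \det(W)^{(r-n-1)/2}e^{-\Tr(W)/2}/\Gamma_n(r/2)$, and the recurrence $\Gamma_n(r/2+1) = 2^{-n}\prod_{i=0}^{n-1}(r-i)\cdot\Gamma_n(r/2)$ reduces the ratio to
\[
L(W) := \frac{f_{r+2}(W)}{f_r(W)} = \frac{\det(W)}{\prod_{i=0}^{n-1}(r-i)},
\]
and hence $\tv(W(n,r),W(n,r+2)) = \tfrac{1}{2}\E_{W\sim W(n,r)}|L(W)-1|$. Next, the Bartlett decomposition gives $\det(W) \sim \prod_{i=0}^{n-1}\chi^2_{r-i}$ with independent factors, and since $\E[(\chi^2_k)^2]=k(k+2)$ the resulting product telescopes to
\[
\E_{W(n,r)}[L^2] = \prod_{i=0}^{n-1}\frac{r-i+2}{r-i} = \frac{(r+1)(r+2)}{(r-n+1)(r-n+2)},
\]
which converges to $(1-\alpha)^{-2}$ and is therefore uniformly bounded in $r$. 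This uniform second-moment bound yields uniform integrability of $\{L\}$. Theorem~\ref{thm:wishart_det_convergence}, combined with the identity $r/(r-n) \to 1/(1-\alpha)$ to adjust the denominator, gives $L \xrightarrow{d} Y := (1-\alpha)e^{Z}$ with $Z\sim\mathcal{N}(0,-2\log(1-\alpha))$; uniform integrability upgrades this to $\E|L-1| \to \E|Y-1|$.

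It remains to evaluate $\E|Y-1|$. Since $\log Y \sim \mathcal{N}(\log(1-\alpha),-2\log(1-\alpha))$ and $\log(1-\alpha)+\tfrac{1}{2}(-2\log(1-\alpha))=0$, the variable $Y$ is lognormal with mean $1$, and a direct split of $|Y-1|$ around its mean (using that $Y \geq 1$ is equivalent to the underlying standard normal exceeding $\sigma/2$) yields the closed form $\E|Y-1| = 4\Phi(\sigma/2)-2$ with $\sigma = \sqrt{-2\log(1-\alpha)}$. For $\alpha = 0.1$ this gives $\sigma \approx 0.459$ and a limiting TV distance of $2\Phi(\sigma/2)-1 \approx 0.182 < 0.2$, so taking $c$ large enough to absorb the gap completes the proof. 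The only non-routine step is the passage from convergence in distribution of $L$ to $L^1$ convergence of $L-1$, which the second-moment bound handles cleanly via uniform integrability; the remaining pieces (density ratio, Bartlett, lognormal mean-deviation formula) are direct calculations.
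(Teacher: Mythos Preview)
Your proof is correct and follows the same route as the paper: compute the likelihood ratio $L = \det(W)/\prod_{i=0}^{n-1}(r-i)$, invoke Theorem~\ref{thm:wishart_det_convergence} for its limiting distribution, and pass to the limit in the TV integral. The only technical difference is in justifying the limit: the paper uses the one-sided form $\tv = \E_{W(n,r)}[(1-L)_+]$, which is bounded by $1$ so dominated convergence applies directly, whereas you use the two-sided form $\tv = \tfrac12\E|L-1|$ and secure the interchange via uniform integrability from your Bartlett second-moment bound; your closed-form limit $2\Phi(\sigma/2)-1$ agrees with the paper's numerically computed $\approx 0.1815$.
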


\begin{proof}
We write $n=\lfloor \alpha r \rfloor$ with the understanding that $n$ is a function of $r$. Let $\mu_{n,r}$ be the measure on $\R^{n(n+1)/2}$ associated to $W(n,r)$, and let $f_{n,r}$ be the corresponding density function (with respect to the Lebesgue measure). Also let $\Delta_{+} \subseteq \R^{n(n+1)/2}$ be the PSD cone.  Then we have

\begin{align*}
    \tv(W(n,r), W(n,r+2)) 
    &= \int_{\Delta_+} \left (f_{n,r}(A) - f_{n,r+2}(A)\right)_+ d\lambda\\
    &= \int_{\Delta_+} \left (1 - \frac{f_{n,r+2}(A)}{f_{n,r}(A)}\right)_+ d\mu_{n,r}\\
\end{align*}

We recall the following standard formula for the density of the Wishart distribution (see \cite{anderson1962introduction} for example):
\[
f_{n,r}(A) =
\frac{(\det A)^{\frac{1}{2}(r-n-1)}e^{-\frac{1}{2}\Tr(A)}}
{\sqrt{2}^{rn}\pi^{\frac14 n(n-1)}\displaystyle\prod_{i=1}^n \Gamma\left(\frac12(r+1-i)\right)}.
\]

Cancelling and applying the identity $\Gamma(x+1) = x\Gamma(x)$ gives

\begin{align*}
\frac{f_{n,r+2}(A)}{f_{n,r}(A)} 
&= \frac{\det A}{2^n} \prod_{i=1}^n\frac{\Gamma\left(\frac12(r+1-i)\right)}{\Gamma\left(1 + \frac12(r+1-i)\right)}\\
&= \frac{\det A}{2^n}\prod_{i=1}^n \frac{1}{\frac12(r+1-i)}\\
&= \frac{\det A}{r(r-1)\ldots (r-n+1)}.
\end{align*}

This gives 

\begin{align*}
\tv(W(n,r), &W(n,r+2)) 
=\\ 
&\int_{\Delta_+} \left( 1 - \frac{\det A}{r(r-1)\ldots (r-n+1)}\right)_+ d\mu_{n,r}(A)\\
&= \E_{A\sim W(n,r)}\left( 1 - \frac{\det A}{r(r-1)\ldots (r-n+1)}\right)_+.
\end{align*}
Therefore it suffices to bound this expectation.

Since $\frac{r-n}{r} \rightarrow (1-\alpha)$ as $r\rightarrow \infty$ we have from Theorem~\ref{thm:wishart_det_convergence} that
\[
\frac{\det W(n,r)}{r(r-1)\ldots (r-n+1)} \rightarrow (1-\alpha)e^{\mathcal{N}(0,-2\log(1-\alpha))}.
\]
Therefore
\[
\tv(W(n,r), W(n,r+2)) \rightarrow \E_{x\sim \mathcal{N}(0, -2\log(1-\alpha))}\left[1 - (1-\alpha)e^x \right]_+,
\]
where swapping the limit with the expectation was justified since the random variables in the limit were all bounded by $1.$  This last expectation may be computed numerically to be approximately $0.1815$ and the claim follows.
\end{proof}

\begin{thm}
\label{thm:wishart_main_lower_bound}
Suppose that $r\geq C_1$ and $d\geq C_2 r$ for absolute constants $C_1$ and $C_2.$ Let $\mathcal{A}$ be an adaptive algorithm making $k$ matrix-vector queries, which correctly decides between $\mathcal{D}_1$ and $\mathcal{D}_2$ with $2/3$ probability. Then $k\geq r/10.$
\end{thm}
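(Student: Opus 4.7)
The plan is to chain Propositions~\ref{prop:wishart_adaptivity_unhelpful} and~\ref{prop:top_corner_suffices} with Lemma~\ref{lem:wisharts_close_in_tvd}, and close with the standard two-point (Le Cam) lower bound. Suppose for contradiction that there is an adaptive algorithm $\mathcal{A}$ using $k < r/10$ matrix-vector queries which decides between $\mathcal{D}_1$ and $\mathcal{D}_2$ correctly with probability at least $2/3$.

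Since $k < r$, Proposition~\ref{prop:wishart_adaptivity_unhelpful} lets us replace $\mathcal{A}$ by an algorithm whose queries are the fixed vectors $e_1, \ldots, e_k$ and whose success probability is at least as large. Its entire transcript is the matrix $A E_k$, and its output is a (possibly randomized) function of this transcript. By the data processing inequality for total variation, the advantage of such a test over random guessing is at most $\tfrac{1}{2}\,\tv(A_1 E_k, A_2 E_k)$, which equals $\tfrac{1}{2}\,\tv(W(k,r), W(k,r+2))$ by Proposition~\ref{prop:top_corner_suffices}.

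Next I would bound this total variation distance. Since $k$ is an integer and $k < r/10$, we have $k \leq \lfloor 0.1 r \rfloor$. The distribution $W(k, \cdot)$ is obtained from $W(\lfloor 0.1 r \rfloor, \cdot)$ by reading off the top-left $k \times k$ principal submatrix, so a second application of data processing gives
\[
\tv\bigl(W(k,r), W(k,r+2)\bigr) \;\leq\; \tv\bigl(W(\lfloor 0.1 r\rfloor,r), W(\lfloor 0.1 r\rfloor,r+2)\bigr) \;\leq\; 0.2
\]
by Lemma~\ref{lem:wisharts_close_in_tvd}, whose hypothesis $r \geq c$ is absorbed into the theorem's hypothesis $r \geq C_1$. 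Combining, the success probability is at most $\tfrac{1}{2} + \tfrac{1}{2}(0.2) = 0.6 < 2/3$, contradicting the assumption on $\mathcal{A}$. The hypothesis $d \geq C_2 r$ enters only to ensure that $W(d,r)$ and $W(d,r+2)$ genuinely have distinct ranks and that the $k$ queried coordinates sit inside the ambient dimension; any constant $C_2 \geq 1$ will do.

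I do not expect any real difficulty in writing this up: the technical substance lives in Proposition~\ref{prop:top_corner_suffices}, which strips away everything outside the top-left $k \times k$ corner, and in Lemma~\ref{lem:wisharts_close_in_tvd}, whose proof invokes Jonsson's CLT for the Wishart log-determinant. The present theorem is essentially bookkeeping on top of those two facts, together with the standard Le Cam conversion of a total variation bound into a distinguishing lower bound.
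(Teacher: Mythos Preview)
Your proposal is correct and follows essentially the same route as the paper: reduce adaptivity away via Proposition~\ref{prop:wishart_adaptivity_unhelpful}, collapse to the leading principal minor via Proposition~\ref{prop:top_corner_suffices}, and then invoke Lemma~\ref{lem:wisharts_close_in_tvd} together with the Le Cam two-point bound. Your explicit data-processing step passing from $W(k,\cdot)$ to $W(\lfloor 0.1r\rfloor,\cdot)$ is a welcome addition, as it justifies applying Lemma~\ref{lem:wisharts_close_in_tvd} (which is stated only at $n=\lfloor 0.1r\rfloor$) when $k$ is strictly smaller; the paper leaves this monotonicity implicit.
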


\begin{proof}
Consider a protocol which makes $k$ matrix-vector queries. By Proposition~\ref{prop:wishart_adaptivity_unhelpful} and Proposition~\ref{prop:top_corner_suffices} it suffices to consider non-adaptive protocols which observe $E_k^T \Pi E_k$.  Suppose that $A$ is either drawn from $\mathcal{D}_1$ or $\mathcal{D}_2$ and hence distributed as $W(k,r)$ or $W(k,r+2)$.  Lemma~\ref{lem:wisharts_close_in_tvd} now implies that distinguishing these distributions requires $k\geq r/10$ as desired.


\end{proof}



\begin{corollary}
An algorithm which estimates all eigenvalues of any matrix $A$ up to $\eps\norm{A}{F}$ error, with $3/4$ probability must make at least $\Omega(1/\eps^2)$ matrix-vector queries.
\end{corollary}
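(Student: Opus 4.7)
The plan is to reduce from Theorem~\ref{thm:wishart_main_lower_bound}. First I would set $r = \lfloor c/\eps^2 \rfloor$ for a suitably small absolute constant $c > 0$, chosen so that the eigenvalue gap established below comfortably exceeds $\eps \norm{A}{F}$; the bound is asymptotic in $\eps$, so I may further assume $\eps$ is smaller than an absolute constant, which ensures $r \geq C_1$. I would also fix a constant $C \geq C_2$ large enough for the Wishart concentration claims below, and set the matrix dimension to $n = C r$. The hard instance then consists of inputs drawn with equal probability from $\mathcal{D}_1 = W(n, r)$ and $\mathcal{D}_2 = W(n, r+2)$.

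The quantitative input is standard concentration for a sample $A = GG^T$ drawn from $W(n, s)$ with $s \in \{r, r+2\}$ and $n \geq C s$. By Marchenko--Pastur--type bounds (see e.g.\ \cite{vershynin2010introduction}), with probability at least $23/24$ all nonzero eigenvalues $\lambda_1(A) \geq \cdots \geq \lambda_s(A)$ lie in the interval $[n/2, 3n/2]$, and simultaneously $\norm{A}{F}$ lies between $n\sqrt{s}/2$ and $2n\sqrt{s}$. On this event $\eps \norm{A}{F} \leq 2\eps n\sqrt{r+2} = O(n\sqrt{c})$, which is at most $n/8$ once $c$ is chosen small enough.

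Next I would argue that any matrix-vector algorithm $\mathcal{A}$ making $k$ queries and returning all eigenvalue estimates to additive error $\eps \norm{A}{F}$ with probability at least $3/4$ yields a distinguisher for $\mathcal{D}_1$ vs.\ $\mathcal{D}_2$. Indeed, under $\mathcal{D}_1$ the true $(r+1)$-th eigenvalue of $A$ is $0$, while under $\mathcal{D}_2$ it is at least $n/2$ on the concentration event. Thresholding the $(r+1)$-th returned estimate at $n/4$ therefore decides correctly on the intersection of the algorithm-success and concentration events, which by a union bound has probability at least $3/4 - 1/24 > 2/3$. Theorem~\ref{thm:wishart_main_lower_bound} then forces $k \geq r/10 = \Omega(1/\eps^2)$.

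The only potentially delicate point is the joint concentration in the second paragraph, but I expect no genuine obstacle: the spectrum of a Wishart matrix with aspect ratio bounded strictly between $0$ and $1$ concentrates in a band of width $O(\sqrt{n s}) = o(n)$ about $n$, and $\norm{A}{F}^2 = \|G^T G\|_F^2$ concentrates around $s n^2$ by a direct variance calculation. Should the constants fail to line up as cleanly as above, one can rebalance by shrinking $c$ or enlarging $C$ at the cost of nothing but the hidden constant in $\Omega(1/\eps^2)$.
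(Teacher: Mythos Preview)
Your proposal is correct and follows essentially the same route as the paper: reduce from Theorem~\ref{thm:wishart_main_lower_bound} with $r = \Theta(1/\eps^2)$ and $n = Cr$, use standard Wishart eigenvalue concentration to bound the nonzero eigenvalues in $[\Theta(n),\Theta(n)]$ and hence $\norm{A}{F} = O(n\sqrt{r})$, and observe that an $\eps\norm{A}{F}$-accurate eigenvalue estimator distinguishes rank $r$ from rank $r+2$. Your version is slightly more explicit than the paper's (you name the threshold on the $(r+1)$-th estimate and carry out the union bound with the concentration event), but there is no substantive difference in the argument.
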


\begin{proof}
The nonzero eigenvalues of $W(n,r)$ are precisely the squared singular values of an $n\times r$ matrix with i.i.d. Gaussian entries.  So by standard bounds (see \cite{vershynin2018high} for example), the nonzero eigenvalues of $W(n,r)$ and $W(n,r+2)$ are bounded between $\frac{1}{2}n$ and $2n$ with high probability as long as $n\geq Cr$ for an absolute constant $C$.  Since $W(n,r)$ has rank $r$, the Frobenius norm of $W(n,r)$ is bounded by $2n\sqrt{r},$ and similarly for $W(n,r+2).$  Thus setting $\alpha = \frac{1}{10\sqrt{r+2}},$ we see that an algorithm which estimates all eigenvalues of a matrix to $\alpha \norm{A}{F}$ additive error could distinguish $W(n,r)$ from $W(n,r+2),$ and hence by Theorem~\ref{thm:wishart_main_lower_bound} must make at least $r/10$ queries.  The result follows by setting $r = \Theta(1/\eps^2).$
\end{proof}

\section{Acknowledgements}
D. Woodruff would like to acknowledge partial support from a Simons Investigator Award. W. Swartworth was partially supported by  NSF DMS \#2011140.

The authors would also like to acknowledge Cameron Musco, Deanna Needell, and Gregory Dexter for helpful conversations when preparing this manuscript.

\printbibliography

\appendix
\section{Rank estimation lower bound from random projections}
\label{appendix:lower_bound_projection}

In this section, we show a lower bound on determining the rank of a random orthogonal projection from matrix-vector queries.  The key intuition is that running a power-method type algorithm is unhelpful since projections are idempotent.  This suggests that adaptivity should be unhelpful, and indeed this is the case.

Throughout this section, we let $\mathcal{D}_1 = \mathcal{D}_1(d,r)$ be an orthogonal projection $\R^d\rightarrow \R^d$ onto a random $r$ dimensional subspace (sampled from the rotationally invariant measure), and let $\mathcal{D}_2$ be an orthogonal projection onto a random $r+2$ dimensional subspace.  Let $\mathcal{D}$ be the distribution obtained by sampling from either $\mathcal{D}_1$ or $\mathcal{D}_2$ each with probability $1/2$.  

We first show that adaptivity is unhelpful in distinguishing $\mathcal{D}_1$ from $\mathcal{D}_2.$  To prove this, we first make a simple observation.

\begin{observation}
Suppose that $\mathcal{P}_1$ and $\mathcal{P}_2$ are any distributions over matrices, and let $U$ be an orthogonal matrix.  Suppose that $x_1$ is an optimal first query to distinguish $\mathcal{P}_1$ and $\mathcal{P}_2$. Then $Ux_1$ is an optimal first query to distinguish $U\mathcal{P}_1 U^T$ and $U\mathcal{P}_2 U^T.$
\end{observation}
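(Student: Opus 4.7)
The plan is to reduce the claim to the invertibility of $U$, which sets up a transparent bijection between testing protocols. First I would pin down what ``optimal first query'' means: in the relevant context (making $k$ total adaptive matrix-vector queries and then outputting a guess), a first query $x_1$ is optimal if, when paired with its best adaptive completion, it minimizes the failure probability for the distinguishing problem $(\mathcal{P}_1,\mathcal{P}_2)$ among all choices of initial query.

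Next I would exploit the single algebraic identity
\[
(UAU^{T})(Ux_1) \;=\; U\,A\,x_1,
\]
which follows from $U^{T}U = I$. Combined with the fact that $A \sim \mathcal{P}_i$ if and only if $UAU^{T} \sim U\mathcal{P}_i U^{T}$, this shows that the distribution of the first response $(UAU^{T})(Ux_1)$ in the transformed problem is exactly the pushforward by $U$ of the distribution of $Ax_1$ in the original problem. Iterating the same identity for subsequent queries $x_2,x_3,\ldots$ in place of $Ux_2,Ux_3,\ldots$ extends this to the full transcript of (query, response) pairs.

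I would then spell out the bijection of testers. Given any $k$-query adaptive tester $\mathcal{T}'$ for $(U\mathcal{P}_1 U^{T},\, U\mathcal{P}_2 U^{T})$ whose first query is $Ux_1$, define the tester $\mathcal{T}$ for $(\mathcal{P}_1,\mathcal{P}_2)$ that starts with $x_1$, applies $U$ to each received response and $U^{T}$ to each subsequent query emitted by $\mathcal{T}'$, and returns the same final bit. Because $U$ is invertible, this is a measure-preserving bijection between the two classes of adaptive testers whose first query is $x_1$ respectively $Ux_1$, and it preserves success probability. By the assumed optimality of $x_1$, the best such $\mathcal{T}$ matches the performance of the overall optimum for $(\mathcal{P}_1,\mathcal{P}_2)$, hence so does the corresponding $\mathcal{T}'$, which establishes that $Ux_1$ is optimal for the transformed problem.

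There is no real obstacle here; the only thing to be careful about is that the bijection is coherent across adaptive rounds, i.e.\ that $\mathcal{T}$'s $(j{+}1)$-st query, which must be measurable in the first $j$ responses it has seen, is obtained by applying $U^{T}$ to $\mathcal{T}'$'s $(j{+}1)$-st query viewed as a function of the $U$-rotated responses. Since $U$ is a fixed orthogonal matrix this correspondence is automatic, and the argument requires no tool beyond the identity displayed above.
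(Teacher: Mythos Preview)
Your argument is correct. The paper states this as an observation without proof, treating it as self-evident; your write-up simply spells out the underlying bijection via the identity $(UAU^{T})(Ux_1)=U(Ax_1)$, which is exactly the intended reasoning. One tiny point worth making explicit: to conclude that $Ux_1$ is optimal for the transformed problem you also need that the overall optima of the two problems coincide, which follows from the same bijection applied with an arbitrary first query $y\leftrightarrow U^{T}y$; this is implicit in your setup but not stated outright.
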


\begin{lem}
\label{lem:adaptivity_unhelpful}
Suppose that there is a (possibly randomized) adaptive algorithm $\mathcal{A}$ which makes $k$ matrix-vector queries to an orthogonal matrix $\Pi\sim \mathcal{D}$ and then decides whether $\Pi$ was drawn from $\mathcal{D}_1$ or $\mathcal{D}_2$ with advantage $\beta.$  Then there is a non-adaptive algorithm which queries on $e_1,\ldots, e_k$ and also achieves advantage $\beta.$
\end{lem}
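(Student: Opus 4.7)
My plan is to mirror the inductive structure of Proposition~\ref{prop:wishart_adaptivity_unhelpful}, showing that an optimal adaptive algorithm's first $j$ queries can be taken to be $e_1, \ldots, e_j$ for $j = 1, \ldots, k$. The base case $j = 1$ is immediate from rotational invariance of both $\mathcal{D}_1$ and $\mathcal{D}_2$ combined with the Observation preceding the lemma: any candidate first query $x_1$ may be rotated to $e_1$ via an orthogonal $U$ that preserves both distributions.

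For the inductive step, I would first compute the conditional distribution of $\Pi$ given responses $\Pi e_i = u_i$ for $i = 1, \ldots, j$. Let $U := \spn\{u_1, \ldots, u_j\}$ and $W := \spn\{e_1 - u_1, \ldots, e_j - u_j\}$; by symmetry and idempotence of $\Pi$ these satisfy $U \perp W$. Since $V := \mathrm{range}(\Pi)$ must contain $U$ and be orthogonal to $W$, the rotational invariance of the prior yields that conditionally $V = U \oplus V'$, where $V'$ is uniformly distributed over $(r - \dim U)$-dimensional (resp.\ $(r+2 - \dim U)$-dimensional) subspaces of $F^\perp$, with $F := U + W$. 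Equivalently $\Pi = P_U + P_{V'}$, where $P_U$ is a known deterministic projection and $P_{V'}$ is rotationally invariant in $F^\perp$.

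The crux is to show that the non-adaptive query $e_{j+1}$ captures the same distinguishing information as the optimal adaptive query. For any query $v = v_F + v_\perp$ with $v_F \in F$ and $v_\perp \in F^\perp$, the response is $\Pi v = P_U v_F + P_{V'} v_\perp$ (mixed terms vanish since $U \subseteq F$ and $V' \subseteq F^\perp$), where the first summand is deterministic given the earlier responses and only the second contains new randomness. By rotational invariance of $V'$ in $F^\perp$, the distribution of $P_{V'} v_\perp$ depends on $v_\perp$ only through $\|v_\perp\|$ up to a known rotation of $F^\perp$; hence by the Observation applied to the conditional sub-problem on $F^\perp$, the distinguishing content of the response depends on $v_\perp$ only through $\|v_\perp\|$. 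In particular, $\Pi e_{j+1} = P_U P_F e_{j+1} + P_{V'}(I - P_F) e_{j+1}$ yields $P_{V'} (I - P_F) e_{j+1}$ as its informative part; since $\|(I - P_F) e_{j+1}\|$ is deterministic given the earlier responses, the non-adaptive algorithm can rescale to recover $P_{V'} \hat{w}$ for some unit $\hat{w} \in F^\perp$, matching the informational content of any adaptive unit-vector query in $F^\perp$. Iterating for $j = 0, 1, \ldots, k-1$ and composing the decoded statistics with $\mathcal{A}$'s decision rule closes the induction and produces a non-adaptive algorithm querying $e_1, \ldots, e_k$ with advantage $\beta$.

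The main obstacle, and the point of departure from the Wishart proof, is that here the random residual $V'$ is supported on $F^\perp$ rather than on the more natural $E_j^\perp$; the subspace $F = U + W$ strictly contains $E_j = \spn\{e_1, \ldots, e_j\}$ in general and depends adaptively on the responses. As a consequence, the non-adaptive algorithm cannot use its responses verbatim as in the Wishart case: its decision rule must actively reconstruct $F$, renormalize the responses by the deterministic scalars $\|(I - P_F) e_{j+1}\|$, and then invoke the adaptive decision procedure. A minor technicality is checking that $\|(I - P_F) e_{j+1}\| > 0$ almost surely so that the rescaling is well-defined, which follows from the continuity of $\mathcal{D}_1, \mathcal{D}_2$ whenever $j < d/2$ since then $\{e_{j+1} \in F\}$ is a probability-zero event.
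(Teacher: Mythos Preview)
Your argument is correct and ultimately rests on the same two ideas as the paper's proof: (i) after observing $\Pi e_1,\ldots,\Pi e_j$, the conditional law of $\Pi$ is rotationally invariant on $F^\perp$ where $F=\spn\{e_1,\ldots,e_j,\Pi e_1,\ldots,\Pi e_j\}$ (your $U+W$), so the next query is only informative through its $F^\perp$-component; and (ii) idempotence of $\Pi$ lets the non-adaptive algorithm recover $\Pi$ on all of $F$ from $\Pi e_1,\ldots,\Pi e_j$ alone, enabling the simulation.

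The packaging differs. The paper introduces \emph{normalized protocols} (those whose $(j{+}1)$-th query lies in $F^\perp$), argues directly from the stabilizer action that all normalized protocols with $k$ queries achieve the same advantage, and then shows in one line that the non-adaptive $e_1,\ldots,e_k$ queries simulate a normalized protocol because $\Pi(I-P_F)e_{j+1}=\Pi e_{j+1}-\Pi P_F e_{j+1}$ and $\Pi P_F e_{j+1}$ is computable via $\Pi^2=\Pi$. Your route instead computes the conditional law explicitly as $\Pi=P_U+P_{V'}$ with $V'$ Haar in $F^\perp$ and runs the induction from Proposition~\ref{prop:wishart_adaptivity_unhelpful}; the rescaling step you describe is exactly the paper's simulation via idempotence viewed through that decomposition. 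Your explicit $P_U+P_{V'}$ formula is a nice addition that makes the ``deterministic part vs.\ random part'' split transparent; the paper's normalized-protocol abstraction makes the equivalence of all adaptive strategies a one-shot statement rather than a step-by-step induction.

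Two small points. First, the paper invokes Yao's principle at the outset to reduce to deterministic protocols; you should do the same (or note that your construction commutes with averaging over the algorithm's internal randomness). Second, your almost-sure nondegeneracy claim $\|(I-P_F)e_{j+1}\|>0$ is exactly the paper's ``almost surely a normalized protocol'' caveat; both are handled by continuity of $\mathcal{D}_1,\mathcal{D}_2$ in the regime $k\leq r$ relevant for the lower bound.
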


\begin{proof}

By Yao's principle, it suffices to consider deterministic protocols, so we will restrict ourselves to deterministic protocols in what follow.

First, let us say that an adaptive protocol making queries $v_1,v_2,\ldots$ is \textit{normalized} if for each $i$, $v_{i+1}$ is in the orthogonal complement of $\spn(v_1,v_2,\ldots, v_i, \Pi v_1,\ldots \Pi v_i),$ and $v_i\neq 0$. We will argue that all normalized protocols making $k$ queries achieve the same advantage.

We first observe that all choices of $v_1$ are equivalent, which is a consequence of rotational invariance along with the observation above.

Suppose that a normalized algorithm makes queries $v_1,\ldots, v_j$ and receives values $y_1,\ldots, y_j$ in the first $j$ rounds.  We observe that the conditional distribution of $\Pi$ under these observations is invariant under the group of orthogonal transformations stabilizing $x_1,\ldots, x_j, y_1,\ldots, y_j.$ Applying the observation to this conditional distribution, again shows that all $x_{j+1}$ are equivalent since the stabilizer of $x_1,\ldots, x_j, y_1,\ldots, y_j$ acts transitively on their orthogonal complement.

Finally we observe that a non-adaptive algorithm which queries on $e_1, \ldots, e_k$ can almost surely simulate a normalized protocol.  Indeed let $P_j$ denote projection onto $\spn(e_1,\ldots, e_j, \Pi e_1, \ldots \Pi e_j).$ Then $e_1, P_1 e_2, \ldots, P_{k-1}e_k$ is almost surely a normalized protocol.  Moreover $\Pi P_{j-1}e_j$ may be computed for each $j$, since the values of $\Pi e_1, \ldots \Pi e_j, \Pi^2 e_1, \Pi^2 e_j$ are all known (this uses that $\Pi$ is a projection and hence idempotent).

\end{proof}

We are now able to turn our attention to non-adaptive algorithms.  Let $E_k \in \R^{d\times k}$ denote the matrix $[e_1,\ldots, e_k].$  As we saw above a general matrix-vector query algorithm might as well observe $\Pi E_k.$  As in our argument for Wishart matrices, our next observation is that only the top $k\times k$ corner is useful.

\begin{lem}
\label{lem:only_top_corner_useful}
Suppose that $\Pi_1\sim \mathcal{D}_1$ and $\Pi_2 \sim \mathcal{D}_2.$ We have that
\[
\tv(\Pi_1 E_k, \Pi_2 E_k) = \tv(E_k^T \Pi_1 E_k, E_k^T \Pi_2 E_k).
\]
\end{lem}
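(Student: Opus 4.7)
The plan parallels the Wishart argument in Proposition~\ref{prop:top_corner_suffices}: I will show that the conditional distribution of the bottom $(d-k) \times k$ block of $\Pi E_k$ given its top $k \times k$ block is the same under both $\mathcal{D}_1$ and $\mathcal{D}_2$. This suffices, because applying the same Markov kernel to each marginal $E_k^T \Pi_i E_k$ reconstructs the full distribution of $\Pi_i E_k$, giving $\tv(\Pi_1 E_k, \Pi_2 E_k) \leq \tv(E_k^T \Pi_1 E_k, E_k^T \Pi_2 E_k);$ the reverse inequality holds automatically since the top block is a deterministic function of $\Pi_i E_k.$

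Write $Y = \Pi E_k \in \R^{d\times k}$ and let $Y_1 \in \R^{k\times k}$ and $Y_2 \in \R^{(d-k)\times k}$ denote its top and bottom blocks. The key algebraic identity, specific to projections, comes from $\Pi^2 = \Pi = \Pi^T$:
\[
Y_1 = E_k^T \Pi E_k = E_k^T \Pi^T \Pi E_k = Y^T Y = Y_1^2 + Y_2^T Y_2,
\]
so $Y_2^T Y_2 = Y_1 - Y_1^2$ is already deterministically pinned down by $Y_1.$

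Next I will use rotational invariance to identify the conditional law of $Y_2$. For any orthogonal $Q' \in \R^{(d-k)\times(d-k)}$, the block matrix $Q = \diag(I_k, Q')$ satisfies $Q E_k = E_k$, and both $\mathcal{D}_1$ and $\mathcal{D}_2$ are invariant under the conjugation $\Pi \mapsto Q \Pi Q^T$ (since uniform subspaces are rotationally invariant). Consequently $(Y_1, Y_2)$ has the same distribution as $(Y_1, Q' Y_2)$, so the conditional distribution of $Y_2$ given $Y_1 = M$ is invariant under the left action of $O(d-k)$. Combined with the constraint $Y_2^T Y_2 = M - M^2$, the conditional law must equal the unique $O(d-k)$-invariant probability measure on the orbit $\{Z \in \R^{(d-k)\times k} : Z^T Z = M - M^2\}$. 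Crucially, this description makes no reference to the rank parameter $s \in \{r, r+2\}.$

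The one potentially delicate point is verifying that this level set really is a single $O(d-k)$-orbit, so that the invariant probability measure is unambiguous. This follows from a standard polar/SVD argument provided $d - k \geq k$, which holds comfortably under the hypotheses of Theorem~\ref{thm:wishart_main_lower_bound} where $d \geq C_2 r \geq C_2 k.$ I expect this --- together with the formal measure-theoretic setup needed to invoke the ``same Markov kernel'' conclusion cleanly --- to be the only minor obstacle.
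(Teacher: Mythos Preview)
Your proposal is correct and follows essentially the same route as the paper: both decompose $\Pi E_k$ into top and bottom blocks, use the projection identity $\Pi^2=\Pi$ to show that $Y_2^T Y_2$ is determined by $Y_1$, and then invoke invariance under conjugation by block-diagonal orthogonal matrices to conclude that the conditional law of $Y_2$ given $Y_1$ is the unique invariant measure on that level set, independent of the rank parameter. Your write-up is arguably a bit more careful than the paper's in flagging the transitivity condition $d-k\geq k$ needed for the level set to be a single orbit.
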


\begin{proof}
Let $\Pi E_k = [M_1; M_2]$ where $M_1 \in \R^{k\times k}$ and $M_2 \in \R^{(d-k) \times k}.$  Observe that since $\Pi$ is a projection, $M_2^T M_2 = M_1 - M_1 M_1^T.$

Let the orthogonal group $SO(n)$ act on $\Pi$ via conjugation.  Let $H$ be the stabilizer of $M_1$ under the action, i.e., the set of $U$ such that $U^T \Pi U E_k = [M_1, M_2']$ for some $M_2'.$  We claim that the orbit of $M_2$ under $H$ is $\{X: X^T X = M_1 - M_1 M_1^T\}.$  To see this, simply observe that $H$ is contained in the stabilizer of $e_1,\ldots, e_k$, which is isomorphic copy of $SO(n-k)$ acting on $\spn(e_1,\ldots, e_k)^{\perp}.$  This latter group acts transitively on $\{X: X^T X = M_1 - M_1 M_1^T\}$ under left multiplication as desired.

This implies that the conditional distribution of $M_2$ on observing $M_1$ is uniform over $\{X: X^T X = M_1 - M_1 M_1^T\}$.  Since the conditional distribution is independent of $r$, the result follows.
\end{proof}

Next we leverage a known result showing that a small principal minor of a random rotation is indistinguishable from Gaussian.  This allows to observe that $E_k^T \Pi E_k$ is nearly indistinguishable from a Wishart distribution when $d$ is large.
\begin{lem}
\label{lem:random_projection_corner_almost_wishart}
Suppose that $r\geq C_1$ and $d\geq C_2 r^2$ for some absolute constants $C_1,C_2$, and let $\Pi \sim \mathcal{D}_1(d,r)$ with $k\leq r.$  Then 
\[
\tv(E_k^T \Pi E_k, W(k,r)) \leq 0.1.
\]
\end{lem}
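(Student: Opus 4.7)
The plan is to appeal to the classical result of Jiang (and earlier work of Stam, Diaconis--Freedman) that small blocks of Haar-distributed orthogonal matrices are close in total variation to Gaussian matrices, and then transfer this to a statement about the Wishart distribution through the quadratic map $X \mapsto XX^T$ together with the data-processing inequality.

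First, I would represent $\Pi = U U^T$ where $U \in \R^{d \times r}$ is drawn uniformly from the Stiefel manifold $V_r(\R^d)$, i.e.\ $U$ can be realized as the first $r$ columns of a Haar-random $d \times d$ orthogonal matrix. Then $E_k^T \Pi E_k = Z Z^T$, where $Z := E_k^T U \in \R^{k \times r}$ is the top $k \times r$ block of $U$. Since total variation is non-increasing under measurable maps, it suffices to compare $\sqrt{d}\,Z$ to a matrix $\tilde G \in \R^{k \times r}$ with i.i.d.\ $\mathcal{N}(0,1)$ entries: then $\tilde G \tilde G^T \sim W(k,r)$, and the factor $\sqrt d$ is the natural normalization coming from the Stiefel representation.

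The main technical ingredient is a quantitative Gaussian approximation for $\sqrt{d}\,Z$. Its joint density on $\{W : W^TW \prec d\, I_r\}$ is proportional to $\det(I_r - W^TW/d)^{(d-k-r-1)/2}$, which one compares directly to the Gaussian density $\propto \exp(-\tfrac12 \|W\|_F^2)$ via the expansion $\log\det(I - X) = -\Tr(X) - \tfrac12\Tr(X^2) - \cdots$ at $X = W^TW/d$. The leading term reproduces the Gaussian exponent exactly; the ratio of the normalizing constants absorbs the expectation of the higher-order corrections, so the final TV bound is controlled by the standard deviation of $\tfrac{1}{d}\Tr((W^TW)^2)$ under either measure. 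Standard chi-square concentration for $\|W\|_F^2$ and $\|W^TW\|_{op}$ under the Gaussian, combined with the assumption $d \geq C_2 r^2$ and $k \leq r$, make these fluctuations an arbitrarily small constant for $C_1,C_2$ large enough, yielding a TV bound of at most $0.1$.

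The main obstacle is quantitative bookkeeping rather than any deep new idea: one must (i) restrict the density comparison to a high-probability set $\{\|W^TW/d\|_{op} \leq 1/2\}$ on which the Taylor expansion of $\log\det$ is valid, (ii) show that the Gaussian tail outside the Stiefel support $\{W^TW \prec d I_r\}$ contributes negligibly, and (iii) match the Stiefel and Gaussian normalization constants carefully, since the dominant perturbation $\tfrac{k+r+1}{2d}\Tr(W^TW)$ is mean-reverting only after this cancellation. Each of these steps is routine given standard Gaussian concentration, and together they suffice to upgrade Jiang's asymptotic theorem into a concrete bound of $0.1$ under $d \geq C_2 r^2$ with $C_2$ (and $C_1$) sufficiently large.
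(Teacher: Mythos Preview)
Your proposal is correct and follows essentially the same route as the paper: write $\Pi = UU^T$ with $U$ the first $r$ columns of a Haar orthogonal matrix, set $Z = E_k^T U$, and use the data-processing inequality to reduce $\tv(E_k^T\Pi E_k,\ W(k,r))$ to $\tv(\sqrt{d}\,Z,\ \tilde G)$ for $\tilde G$ i.i.d.\ standard Gaussian. The only difference is that the paper then invokes Theorem~1 of Jiang (2006) as a black box for this last TV bound, whereas you sketch the underlying density-ratio/$\log\det$ argument that proves Jiang's result; either way the condition $d\geq C_2 r^2$ (so $kr/d \leq 1/C_2$) is exactly what is needed.
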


\begin{proof}
Note that $\Pi$ can be written as $(UE_r)(UE_r)^T$ where $U$ is a random orthogonal matrix sampled according to the Haar measure.  Let $G \in \R^{k\times r}$ be a matrix with i.i.d. $\mathcal{N}(0,\frac1d)$ entries.  Then we have
\begin{align*}
    \tv(E_k^T \Pi E_k, G^T G)
    &= \tv(E_k^T (UE_r)(UE_r)^T E_k, G^T G)\\
    &= \tv((E_k^T U E_r)(E_k^T U E_r)^T, G^T G)\\
    &\leq \tv(E_k^T U E_r, G^T),
\end{align*}
where the last line follows from the data processing inequality.

Note that $E_k^T U E_r$ is simply the top $k\times r$ corner of a random orthogonal matrix, and $G^T$ is a $k\times r$ matrix with i.i.d. $\mathcal{N}(0,\frac1d)$ entries.  The claim now follows from Theorem 1 of \cite{jiang2006many}.
\end{proof}

\begin{thm}
Suppose that $r\geq C_1$ and $d\geq C_2 r^2$ for absolute constants $C_1$ and $C_2.$ Let $\mathcal{A}$ be an adaptive algorithm making $k$ matrix-vector queries to a sample from $\mathcal{D}$ which correctly decides between $\mathcal{D}_1$ and $\mathcal{D}_2$ with $3/4$ probability. Then $k\geq r/10.$
\end{thm}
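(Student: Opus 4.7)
The plan is to assemble the three structural reductions proved in this appendix with the Wishart TV bound from the previous section to squeeze the distinguishing advantage below $1/4$ whenever $k < r/10$.

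First, I would invoke Lemma~\ref{lem:adaptivity_unhelpful} to reduce to non-adaptive algorithms that query on $e_1,\ldots,e_k$ and hence observe the matrix $\Pi E_k$. Next, by Lemma~\ref{lem:only_top_corner_useful}, distinguishing $\Pi_1 E_k$ from $\Pi_2 E_k$ has the same total variation value as distinguishing the $k\times k$ leading principal submatrices $E_k^T \Pi_1 E_k$ and $E_k^T \Pi_2 E_k$. So without loss of generality the tester sees only this top corner.

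Second, suppose for contradiction that $k < r/10$. Since $k \le r$ and $d \ge C_2 r^2$, Lemma~\ref{lem:random_projection_corner_almost_wishart} applies to both $\mathcal{D}_1$ and $\mathcal{D}_2$, giving
\[
\tv\bigl(E_k^T \Pi_1 E_k,\, W(k,r)\bigr) \le 0.1, \qquad \tv\bigl(E_k^T \Pi_2 E_k,\, W(k,r+2)\bigr) \le 0.1.
\]
(For the second bound we use $k \le r \le r+2$ and $d \ge C_2 r^2 \ge C_2' (r+2)^2$ after adjusting constants.) On the other hand, $k < r/10$ together with $r \ge C_1$ (taking $C_1$ large enough) lets me apply Lemma~\ref{lem:wisharts_close_in_tvd} with $n = k \le \lfloor 0.1 r \rfloor$ to obtain
\[
\tv\bigl(W(k,r),\, W(k,r+2)\bigr) \le 0.2.
\]

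Third, chaining these three bounds by the triangle inequality for total variation yields
\[
\tv\bigl(E_k^T \Pi_1 E_k,\, E_k^T \Pi_2 E_k\bigr) \le 0.1 + 0.2 + 0.1 = 0.4.
\]
The optimal success probability of any test distinguishing two distributions with TV distance $d$ is at most $\tfrac{1}{2} + \tfrac{d}{2}$, so the advantage here is at most $0.7 < 3/4$, contradicting the assumed success probability of $3/4$. Thus $k \ge r/10$, as claimed.

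The conceptual content of the argument is entirely carried by the three lemmas and by Lemma~\ref{lem:wisharts_close_in_tvd}; the main obstacle is really just verifying that the parameter regimes in those four results are mutually compatible (the constraint $k \le r$ needed for Lemma~\ref{lem:random_projection_corner_almost_wishart}, the constraint $n = \lfloor \alpha r\rfloor$ with $\alpha = 0.1$ in Lemma~\ref{lem:wisharts_close_in_tvd}, and the $d \ge C_2 r^2$ hypothesis), and then tracking the constants so that $0.1 + 0.2 + 0.1 < 2 \cdot (3/4) - 1$.
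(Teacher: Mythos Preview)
Your proposal is correct and follows essentially the same route as the paper's own proof: reduce via Lemma~\ref{lem:adaptivity_unhelpful} and Lemma~\ref{lem:only_top_corner_useful} to the top $k\times k$ corner, compare to Wisharts via Lemma~\ref{lem:random_projection_corner_almost_wishart}, apply Lemma~\ref{lem:wisharts_close_in_tvd}, and chain with the triangle inequality to get $\tv \le 0.4$. Your explicit conversion to success probability $\le 0.7 < 3/4$ and your remarks on parameter compatibility (in particular the implicit data-processing step needed to pass from $k \le \lfloor 0.1 r\rfloor$ to the exact $n = \lfloor 0.1 r\rfloor$ in Lemma~\ref{lem:wisharts_close_in_tvd}) are welcome clarifications the paper leaves implicit.
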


\begin{proof}
Consider a protocol which makes $k$ matrix-vector queries. By Lemma~\ref{lem:adaptivity_unhelpful} and Lemma~\ref{lem:only_top_corner_useful} it suffices to consider non-adaptive protocols which observe $E_k^T \Pi E_k$.  Suppose that $\Pi_1$ and $\Pi_2$ are random projections drawn from $\mathcal{D}_1$ and $\mathcal{D}_2$ respectively.  Then by Lemma~\ref{lem:random_projection_corner_almost_wishart}, we have 
\[\tv(E_k^T \Pi_1 E_k, W(k,r))\leq 0.1\] and \[\tv(E_k^T \Pi_2 E_k, W(k,r+2))\leq 0.1.\]
By the triangle inequality, \[\tv(E_k^T \Pi_1 E_k, E_k^T \Pi_2 E_k) \leq 0.2 + \tv(W(k,r), W(k,r+2)),\] which in turn is bounded by $0.4$ by Lemma~\ref{lem:wisharts_close_in_tvd} for $k< r/10$.  The result follows.
\end{proof}

\section{Faster sketching}
\label{appendix:faster_sketches}

In this section, we make several observations, which allow for our sketch to be applied more efficiently.

\subsection{Optimized runtime of dense sketches}
We observe that known results for fast rectangular matrix multiplication allow for the sketch to be applied in near linear time, provided that $d$ is sufficiently large relative to $\eps.$

\cite{gall2018improved} shows that multiplication of a $d \times d^{\alpha}$ matrix and a $ d^{\alpha} \times d$ matrix, may be carried out in $O(d^{2+\gamma})$ time for any $\gamma > 0,$ for $\alpha \geq 0.32.$  Since this is known to require the same number of operations as multiplying a $ d^{\alpha} \times d$ and a $d\times d$ matrix (see \cite{le2012faster} for example), our dense Gaussian sketch may be applied in time $O(d^{2+\gamma})$ as long as the sketching dimension $k$ is bounded by $O(d^{.32}).$  Since we take $k = O(1/\gamma^2)$, our sketch may be applied in near-linear time as long as $k = 1/\gamma^2 \leq O(d^{.32})$ or equivalently when $\gamma \gtrsim d^{-0.16}.$

\subsection{Faster sketching for sparse PSD matrices} We observe that a variant of our sketch may be applied quickly to sparse matrices, at least when the input matrix is PSD.

Suppose without loss of generality that $\norm{A}{F}=1.$ Our first step is to apply the $\ell_2$ heavy hitters sketch, $SAT^T$ of \cite{andoni2013eigenvalues}.  While they choose $S$ and $T$ to be Gaussian, it can be verified that their analysis carries through as long as $S$ and $T$ are $\eps$-distortion oblivious subspace embeddings on $k$ dimensional subspaces. We choose to take $S$ and $T$ to be the sparse embedding matrices of \cite{cohen2015optimal}.

Since $S$ and $T$ are in particular $O(1)$ distortion Johnson-Lindenstrauss maps, $\norm{SAT^T}{F} \leq 2\norm{A}{F}$ with good probability. Now, by setting $k = \poly(1/\eps)$ in theorem 1.2 of \cite{andoni2013eigenvalues}, we get that the singular values of $S A T^T$ approximate the top $1/\eps^2$ eigenvalues of $A$ to within $\eps$ additive error (the remaining eigenvalues of $A$ are $O(\eps)$ and so may be estimated as $0$).

Write $M = S A T^T.$  It now suffices to estimate the singular values of $M$ to $O(\eps)$ additive error.  For this we first symmetrize $M$ forming the matrix
\begin{equation}
    M_{\text{sym}} =
    \begin{pmatrix}
    0 & M\\
    M^T & 0
    \end{pmatrix}.
\end{equation}
Note that the eigenvalues of $M_{\text{sym}}$ are precisely the singular values of $M.$  To approximate the eigenvalues of $M_{\text{sym}}$ we use our dense Gaussian sketch, yielding the optimal sketching dimension of $O(1/\eps^2).$ Since $M_{\text{sym}}$ has dimensions $\poly(1/\eps)$, this last sketch may be carried out in $\poly(1/\eps)$ time.

Since $S$ and $T$ were chosen to be sparse embedding matrices, the full sketch runs in $\poly(\frac{1}{\eps}) \text{nnz}(A)$ time.  To summarize, our final sketching dimension is $O(1/\eps^2)$ on each side, and we approximate all eigenvalues to within $\eps\norm{A}{F}$ additive error.

\end{document}